\documentclass[aps,twocolumn,pra,10pt]{revtex4-1}
\usepackage{graphics}
\usepackage{helvet}
\usepackage{dcolumn}
\usepackage{bm}
\usepackage{amsmath}
\usepackage{hyperref}
\hypersetup{colorlinks=true, urlcolor=blue}
\usepackage{color}
\usepackage{amssymb}
\usepackage{amsfonts}
\usepackage{graphicx}
\usepackage{physics}
\usepackage{amsthm}

\newtheorem{claim}{Claim}
\newtheorem{theorem}{Theorem}
\newtheorem{lemma}{Lemma}

\numberwithin{tsubcase}{tcase}

\def\a{\alpha}
\def\b{\beta}
\def\g{\gamma}
\def\d{\delta}
\def\m{\mathit{m}}
\def\mB{\mathcal{B}}
\def\mS{\mathcal{S}}

\def\cc#1{c_{#1}^2}
\def\c#1{c_{#1}}
\def\be{\begin{equation}}
\def\ee{\end{equation}}

\setlength{\parindent}{0pt}

\begin{document}
\allowdisplaybreaks 
\addtolength{\jot}{1em} 

\title{Complementarity between tripartite quantum correlation and bipartite Bell inequality violation in three qubit states}

\author{Palash Pandya$^\ddag$, Avijit Misra$^\dag$, Indranil Chakrabarty$^\ddag$}
\affiliation{$^\ddag$ Center for Security Theory and Algorithmic Research,\\ 
International Institute of Information Technology, Gachibowli, Hyderabad, India}

\affiliation{$^\dag$ Harish Chandra Research Institute, Chhatnag Road, Jhunsi, Allahabad, UP, India.}

\begin{abstract}
 We find a single parameter family of genuinely entangled three qubit pure states, called the maximally 
 Bell inequality violating states (MBV), which exhibit maximum Bell inequality violation by the reduced  bipartite system for a fixed amount of genuine tripartite entanglement quantified by the so called tangle measure. This in turn implies that there holds a complementary relation between the Bell inequality violation by the reduced bipartite systems and the tangle present in the three qubit states, not necessarily pure. The MBV states also exhibit maximum Bell inequality violation by the reduced  bipartite systems of the three qubit pure states with a fixed amount of genuine tripartite correlation quantified by the generalized geometric measure, a genuine entanglement measure of multiparty pure states, and the discord monogamy score, a multipartite quantum correlation measure from information theoretic paradigm. The aforementioned complementary relation has also been established for three qubit pure states for the generalized geometric measure and the discord 
monogamy score respectively. The complementarity between the Bell inequality violation by the reduced bipartite 
systems and the genuine tripartite correlation suggests that the Bell inequality violation in the reduced two qubit system comes at the cost of the total tripartite correlation 
present in the entire system.
\end{abstract}

\maketitle

\section{ Introduction}
\label{sec:intro}
Quantum entanglement \cite{schrodinger, qent-rmp} allows correlation between distant parties that are completely forbidden in the classical regime \cite{epr-1935}. In 1964, John Bell formally showed that predictions of quantum mechanics are incompatible with the notion of local realism \cite{johnbell-epr}. 
After Bell's celebrated work there have been numerous conceptual and technical developments for studying nonlocality in quantum systems \cite{bell-rmp}. The Bell inequality and Bell-type inequalities \cite{johnbell-epr,chsh,bell-type,svetlichny} set an upper bound on the correlations between measurement
statistics of many particle systems that cannot be explained by local hidden variable models. If the measurement statistics violates some Bell-type inequality then the particles are said to posses nonlocal correlation \cite{incompleteness-nonlocality-redhead}. Nonlocality is not only of fundamental interest but it has also been employed as a key resource in several quantum information protocols such as key distribution \cite{key-dist} and quantum randomness generation \cite{qrgen}. On the other hand quantum entanglement has been a key resource for various information theoretic protocols such as dense coding \cite{dense,mdcc}, teleportation \cite{teleport} and many others \cite{protocols}. As quantum entanglement and nonlocality both are essential as resources in information theory it is interesting to establish the link between them. It is always exciting to know which states are both entangled and nonlocal and thus can be useful for several information theoretic protocols 
simultaneously. For example in Ref. \cite{vertesi}, it has been established that all quantum states useful for teleportation are nonlocal resources. All bipartite pure entangled states violate the Bell inequality and the magnitude of the violation is directly proportional to the amount entanglement of the states \cite{gisin-npstates}. 
In Ref. \cite{horodecki-mrho}, the necessary and sufficient condition for a two qubit mixed state to violate the Bell-CHSH inequality has been derived. Nonlocality and entanglement has been extensively studied in qubit systems and anisotropic one dimensional {\it XY} model in presence of a transverse magnetic field in \cite{batle-nl-ent} and \cite{batle-XY} respectively.
Nonlocality in more than two parties is much more complex than the two party scenario. Therefore, the link between the multiparty nonlocality and multiparty correlation is far more complex. Despite remarkable progress the precise link is still not clear in this regard \cite{bell-rmp}.

 In Ref. \cite{sghose-rungta}, the relation between Svetlichny's inequality \cite{svetlichny} and a specific genuine entanglement measure, tangle  has been studied
for families of three qubit pure states. One of the main problems in detecting genuine multiparty nonlocality in three qubit systems is to distinguish between the violations 
arising from reduced density matrices and from the tripartite state \cite{collins-cereceda}. However, in Ref. \cite{tura2013} it has been shown very recently, that in 
specific cases one or two body correlation functions are sufficient to detect nonlocality in many body systems. Moreover, several entanglement criteria have been proposed 
based on the two body expectation values \cite{criterion}. Considering the significance of the Bell inequality violations by reduced density matrices in composite systems, 
in this work we study how the Bell inequality violation by the reduced bipartite systems depends on the genuine tripartite correlation in three qubit pure 
states. We find that a single parameter class of genuinely entangled three qubit pure states, abbreviated as the maximally Bell inequality violating states (MBV), 
exhibits maximum Bell inequality violation by one of its reduced two qubit systems for a fixed amount of genuine tripartite correlation. This implies a complementary 
relation between the bipartite Bell inequality violation and the genuine tripartite correlation measures in three qubit pure states, with the MBV states lying at the 
boundary of the complementary relation. In this work, we consider the well known tangle measure \cite{Coffmankunduwotters}  and the generalized geometric measure (GGM), a genuine entanglement measure of multipartite pure
states \cite{ggm-ref,ggm} from the entanglement separability paradigm  and the discord monogamy score 
(DMS) \cite{discord,dms1} from the information theoretic paradigm. The complementary relation holds for all the three aforementioned correlation measures. As the $GHZ$ 
and the $W$ classes are two 
disjoint but complete subsets of genuinely entangled three qubit pure states, we consider the $GHZ$ and the $W$ classes separately to establish the aforementioned 
complementary relation. The $GHZ$ and $W$ classes of states are characterized by five and four independent parameters respectively, as we will discuss in Sec. \ref{sec:comp}.
We have established the complementarity analytically for the entire $W$ class of states, while for the $GHZ$ class of states we have kept one parameter fixed, i.e., we 
consider four parameters to establish the complementary relation analytically. However, from numerical study we have claimed that the complementary relation holds in general
for the set of three qubit pure states. Given the complementary relation between the maximum bipartite Bell inequality violation 
and the tangle holds for all three qubit pure states, it can be proved that it holds for all three qubit mixed states, by using the convexity properties of the maximum bipartite 
Bell inequality violation and the tangle. Thus, we claim that the complementary relation for the tangle holds 
for all three qubit states, not necessarily pure.
Our result can be used in a scenario where three parties share genuinely entangled systems to perform information theoretic protocols among them and at the same time 
they might need nonlocal resources between their subparts. In this regard it is very useful to know which state is more nonlocal in the bipartite scenario for a fixed 
amount of genuine tripartite correlation. 

 The organization of the paper is as follows. In Sec. \ref{sec:corr}, we provide a brief review of the measures 
of genuine quantum correlations that we have used in this work. We discuss the Bell-CHSH inequality and a No-go theorem for the same in Sec. \ref{sec:bell}.
In Sec. \ref{sec:comp}, we study the relation of the Bell inequality violation in the reduced subsystems with the genuine correlation present in the tripartite system and 
establish the complementarity between them. We summarize our result and conclude in Sec. \ref{conclu}.

\section{Various Measures of Tripartite Quantum Correlation}
\label{sec:corr}

 In this section we briefly discuss the different types of measures that we use in this work 
 for quantifying quantum correlation of the genuinely entangled three qubit pure states. By genuinely entangled it is meant that the quantum state under consideration is not separable in any bipartite  cut. These measures are \textit{A.}Tangle, \textit{B.} Generalized Geometric Measure (GGM), and \textit{C.} Discord Monogamy Score (DMS). Among them the first two belong to the entanglement-separability paradigm, while the third one belongs to the
information-theoretic paradigm. Moreover, the measures, tangle and discord monogamy score are based on the concept of monogamy of quantum correlations and the GGM is conceptualized using the geometric distance between two quantum states.

\subsection{ Tangle}
 The tangle is a genuine entanglement measure of three qubit systems that is conceptualized based on monogamy of quantum correlations \cite{Coffmankunduwotters,monogamybunch}.
The quantum monogamy score \cite{mdcc} corresponding to the square of the 
bipartite entanglement measure, called the Concurrence \cite{wootters-concurrence},  is the tangle \cite{Coffmankunduwotters}.
Concurrence of a two qubit system is given as
$C(\rho_{AB}) = \max\{0,\lambda_1-\lambda_2-\lambda_3-\lambda_4\},$
where $\lambda_1,\dots,\lambda_4$ are the square roots of the eigenvalues of $\rho_{AB}\tilde{\rho}_{AB}$ in 
decreasing order, $\tilde{\rho}_{AB}= (\sigma_y \otimes \sigma_y)\rho_{AB}^{*}(\sigma_y \otimes \sigma_y)$. 
Here the complex conjugation $\rho_{AB}^{*}$ is taken in the computational basis, and $\sigma_y$ is the Pauli spin matrix.

 Therefore, the tangle of a three-qubit state $\rho_{ABC}$ is given by \cite{Coffmankunduwotters}
\begin{equation}
\tau(\rho_{ABC}) = C^2_{A:BC} - C^2_{AB} - C^2_{AC}.
\label{eq:tangle}
\end{equation}

 The tangle is always non-negative \cite{Coffmankunduwotters}. In particular, it is zero only for the $W$-class states and positive for the $GHZ$-class states \cite{DurVidalCirac}.
\subsection{ Generalized Geometric Measure}
  
A multipartite pure quantum state $|\psi_{A_1,A_2,\ldots, A_N}\rangle$ is genuinely multipartite entangled if it is not separable across 
any  bipartition. The Generalized Geometric Measure (GGM) \cite{ggm} quantifies the genuine multipartite entanglement for these $N$-party states  
 based on the distance from the set of all multiparty states that are not genuinely entangled. The GGM is given as
\begin{equation}
\label{eq:GGM1}
{\cal G} (\psi_{A_1,A_2,\ldots, A_N}) = 1 - \max_{\ket{x}} |\bra{x}\ket{\psi_{A_1,A_2,\ldots, A_N}}|^2.
\end{equation}
This maximization is done over all states $|x\rangle$ which are not genuinely entangled. An equivalent mathematical expression of Eq.\eqref{eq:GGM1}, is the following
\begin{equation}
\label{GGM2}
\mathcal{G} (\psi_n) =  1 - \max \{\lambda^2_{ I : L} |  I \cup  L = \{A_1,\ldots, A_N\},  I \cap  L = \emptyset\},
\end{equation}
where \(\lambda_{I:L}\) is  the maximal Schmidt coefficient in the  bipartite split $I: L$ of $| \psi_N \rangle$.

\subsection{Discord Monogamy Score}
 The discord monogamy score is a genuine quantum correlation measure from the information theoretic paradigm, which is based on monogamy of quantum correlations \cite{monogamybunch}, as the name suggests. Quantum discord \cite{discord} is considered to be the the difference between the total correlation and the classical correlation of a two-party quantum state $\rho_{AB}$ and is given as
\begin{equation}
\label{eq:discord}
D(\rho_{AB})= S(\rho_B)+ S(\rho_{A|B})- S(\rho_{AB}),
\end{equation}
where $S(\rho)= - \mbox{tr} (\rho \log_2 \rho)$ is the von Neumann entropy of the quantum state \(\rho\). Here 
 \(\rho_A\) and \(\rho_B\) are the reduced density matrices of the quantum state  \(\rho_{AB}\).
$S(\rho_{A|B})$ is the measured conditional entropy when a projection-valued measurement is performed on $B$ and is given by
\begin{equation}
S(\rho_{A|B}) = \min_{\{P_i\}} \sum_i p_i S(\rho_{A|i}).
\end{equation}
The minimization is carried out over the complete set of rank-one projectors $\{P_i\}$ and  
$p_i = \mbox{tr}_{AB}[(\mathbb{I}_A \otimes P_i) \rho_{AB} (\mathbb{I}_A \otimes P_i)]$, is the probability of obtaining the outcome $P_i$. $\mathbb{I}_A$ is the identity operator on the Hilbert space of $A$.
The output state is $\rho_{A|i} = \frac{1}{p_i} \mbox{tr}_B[(\mathbb{I}_A \otimes P_i) \rho_{AB} (\mathbb{I}_A \otimes P_i)]$, corresponding to the outcome $P_i$.

 The quantum monogamy score corresponding to the quantum discord is the discord monogamy score \cite{dms1} which is given as
\begin{equation}
\delta_D = D_{A:BC} - D_{AB} - D_{AC}.
\label{eq:}
\end{equation}
Note that the quantum discord for three-qubit states can be both monogamous and non-monogamous unlike the square of the concurrence \cite{dms1}.

\section{Bell Inequality Violation}
\label{sec:bell}
 In 1964 John Bell established that violation of the Bell inequality by a two-party state excludes any local realistic description of the state. 
All bipartite pure entangled states violate the Bell inequality and thus forbid any local realistic description for them. The necessary and sufficient condition 
for a two-qubit mixed state to violate the Bell-CHSH inequality~\cite{chsh} was first given in Ref. \cite{horodecki-mrho}.
For an arbitary two-qubit state $\rho$, violation of the Bell-CHSH inequality implies that the Bell-CHSH value $\mathcal{S}$ is greater than $2$. It can be shown that the maximum Bell-CHSH value $\mathcal{S}(\rho)$ for a two-qubit state $\rho$ is given by \cite{horodecki-mrho}
\begin{equation}
\label{eq:max-chsh-violation}
 \mathcal{S}(\rho)=2\sqrt{M(\rho)},
\end{equation}
where $M(\rho)=m_1+m_2$, with $m_1$ and $m_2$ are the two largest eigenvalues of $T_{\rho}^T T_{\rho}$. $T_{\rho}$ is the correlation matrix associated with the two-qubit state $\rho$ with entries $(T_{\rho})_{ij}=Tr(\sigma_i \otimes\sigma_j\rho)$, where $\sigma_i$'s are the Pauli matrices. $T_{\rho}^T$ denotes the transpose of $T_{\rho}$.
Therefore, violation of the Bell-CHSH inequality implies
\begin{equation}\label{eqn:Mrho}
 M(\rho)>1. 
\end{equation}
Henceforth, by violation of the Bell inequality by a two-qubit state we mean violation of the Bell-CHSH inequality. Note that in this article, we consider the Bell inequality violation by two qubit quantum states only. 
The quantity $\mathcal{B}(\rho_{AB})$, defined as
\begin{equation}
\label{eq:biv-def}
\mathcal{B}(\rho_{AB})=\max\{0,M(\rho_{AB})-1\}\,,
\end{equation}
quantifies the amount of Bell inequality violation and hence the nonlocal correlation of two qubit quantum state $\rho_{AB}$ \cite{bell-monogamy}. 
Among the three reduced two qubit states of a three qubit pure state $\ket{\psi}$, to pick the one with the maximum Bell inequality violation we define
\begin{equation}
\mathcal{B}_{\max}(\psi) =\max\{\mathcal{B}(\rho_{AB}),\mathcal{B}(\rho_{BC}),\mathcal{B}(\rho_{AC})\}\,.
\end{equation}
In this context it is interesting to note that there are two qubit mixed entangled states which do not violate the Bell inequality \cite{horodecki-mrho}. 
In other words, violation of the Bell inequality by a quantum state is not synonymous with the idea of the state being entangled. It was shown that such a local character of quantum correlations traces back to the monogamy trade-off obeyed by bipartite Bell correlation \cite{non-violation}.
 Monogamy for the Bell inequality violation \cite{bell-monogamy} implies that, \textit{if a quantum state shared by three qubits leads to the violation of the Bell inequality among any two of its sub-parts, then it prohibits 
its violation for the other two states which the sub-parts share with the third party.} Monogamy of the Bell inequality violation, thus imposes general constraints on the nature of entanglement and Bell correlation \cite{non-violation}. 
In this paper we deal with the Bell inequality violations by the reduced two qubit systems of the three qubit pure states $\ket{\psi}_{ABC}$. Since only one among the three reduced density matrices of $\ket{\psi}_{ABC}$ can violate the Bell inequality, the bipartite Bell inequality
violation of $\ket{\psi}_{ABC}$ implies that it either comes from $\rho_{AB}$, $\rho_{AC}$ or $\rho_{BC}$. As a direct consequence, only one of $\mathcal{B}(\rho_{AB})$, $\mathcal{B}(\rho_{BC})$, or $\mathcal{B}(\rho_{AC})$ is non-zero, which is then picked up by the quantity $\mathcal{B}_{\max}(\psi)$.

\section{Bell inequality violation versus Quantum Correlation}
\label{sec:comp}
 In this section we establish a relation between the bipartite Bell inequality violation and the genuine tripartite correlation for three qubit pure states. In particular we show that there exists a complementary relation between the genuine tripartite quantum correlation measures and the bipartite Bell inequality violation of three qubit pure states. We identify the single parameter family of genuinely tripartite entangled three qubit pure states that give the maximum bipartite Bell inequality violation for a fixed amount of tripartite correlation, thus lying at the boundary of the aforesaid complementary relation for each of the three correlation measures that are considered in this work.
This single parameter family of the genuinely entangled three qubit pure states is given by
\begin{equation}
\label{mdcc-class-state}
\ket{\psi}_\m = \frac{\ket{000} + \m(\ket{010}+\ket{101})+ \ket{111}}{\sqrt{2+2\m^2}},
\end{equation}
where $m\in[0,1]$. These states belong to the $GHZ$ class  when $m\in[0,1)$. For $\m=1$, the state belongs to the $W$ class as
the tangle is zero at this point. We denote this class of states as the maximally Bell inequality violating (MBV) class of states. This class of states has been recognized as the maximally dense-coding capable (MDCC) \cite{mdcc}, for having maximal dense coding capabilities with fixed amount of genuine tripartite quantum correlations.
\subsection{Tangle versus the Maximum Bell Inequality Violation}
 We first derive the complementarity between the tangle and the bipartite Bell inequality violation for genuinely entangled three qubit pure states.
As the $GHZ$ and the $W$ classes are two disjoint but complete subsets of genuinely entangled three qubit pure states \cite{DurVidalCirac},
it is sufficient to establish the complementarity for $GHZ$ and $W$ classes individually.

 The $GHZ$ class is defined as a set of states which can be converted into the $GHZ$ state, $\frac{1}{\sqrt{2}}\left (\ket{000}+\ket{111}\right )$, using SLOCC with non-vanishing probability \cite{DurVidalCirac}.
This class of states is characterized by five parameters, $\a,\b,\g,\d$ and $\phi$, with a general state in the class being defined as follows
\begin{equation}
\label{ghz-class-state}
\ket{\psi}_{GHZ} = \sqrt{K} \left( c_{\d} \ket{000} + s_{\d} e^{i \phi} \ket{\varphi_\a}\ket{\varphi_\b}\ket{\varphi_\g} \right),
\end{equation}
where $c_\d$ and $s_\d$ stand for $\cos{\d}$ and $\sin{\d}$ respectively, $K = \left( 1+ c_\a c_\b c_\g c_\phi s_{2\d} \right)^{-1}$ is the normalizing constant, and
\begin{align}
\ket{\varphi_\a} &= c_\a\ket{0} + s_\a\ket{1} ,\nonumber \\ 
\ket{\varphi_\b} &= c_\b\ket{0} + s_\b\ket{1} ,\nonumber \\
\ket{\varphi_\g} &= c_\g\ket{0} + s_\g\ket{1} ,
\end{align}
with parameter ranges, $\a , \b , \g \in (0 , \pi/2]$, $\d \in (0, \pi / 4] $ and $\phi \in [0, 2\pi)$. 
When $\phi=0$, we denote the corresponding $GHZ$ subclass and the respective states by $GHZ^R$ and $\ket{\psi}_{GHZ^R}$ respectively. It is worth mentioning that the complementary relation is shown analytically, to hold for the $GHZ^R$ class of states for all the three correlation measures considered in this paper. However, numerical studies suggest that it holds for the entire $GHZ$ class of states for all the three correlation measures.

To  derive the complementary relation for the $GHZ^R$ class state, we begin 
with the following theorem.
\begin{theorem} \label{theorem3}
If two three qubit pure states $\ket{\psi}_{GHZ^R}$ and $\ket{\psi}_m$ are such
that the corresponding tangles, $\tau(\psi_{GHZ^R})$ and $\tau(\psi_{m})$ are equal,
then the maximal bipartite Bell inequality violations necessarily follow
\begin{equation} 
	\mathcal{B}_{\max}({\psi}_{m}) \geq \mathcal{B}_{\max}({\psi_{GHZ^R}}). 
\end{equation}
\end{theorem}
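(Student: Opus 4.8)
\emph{Proof plan.} Once the maximally Bell‑violating side is evaluated explicitly, the statement collapses to a one‑sided inequality, so I would organise the argument in three steps. First I would compute the three two‑qubit marginals of $\ket{\psi}_m$ directly. One finds $\rho_A=\rho_C=\mathbb{I}/2$; that $\rho_{AB}$ and $\rho_{BC}$ are separable, with $T_{\rho_{AB}}=\operatorname{diag}\!\big(0,0,\tfrac{1-m^2}{1+m^2}\big)$, so $M(\rho_{AB})=M(\rho_{BC})=\big(\tfrac{1-m^2}{1+m^2}\big)^2\le1$ and $\mathcal{B}(\rho_{AB})=\mathcal{B}(\rho_{BC})=0$; and that $\rho_{AC}$ is supported on $\operatorname{span}\{\ket{00},\ket{11}\}$ with $T_{\rho_{AC}}=\operatorname{diag}\!\big(\tfrac{2m}{1+m^2},-\tfrac{2m}{1+m^2},1\big)$, so $M(\rho_{AC})=1+\tfrac{4m^2}{(1+m^2)^2}$ and $\mathcal{B}_{\max}(\psi_m)=\mathcal{B}(\rho_{AC})=\tfrac{4m^2}{(1+m^2)^2}$. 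A parallel computation of the concurrences gives $C_{AB}=C_{BC}=0$, $C_{AC}=\tfrac{2m}{1+m^2}$ and $C^2_{A:BC}=1$, so by Eq.~\eqref{eq:tangle}
\begin{equation}
\tau(\psi_m)=1-\frac{4m^2}{(1+m^2)^2}=\Big(\frac{1-m^2}{1+m^2}\Big)^2 ,
\end{equation}
whence $\mathcal{B}_{\max}(\psi_m)=1-\tau(\psi_m)$. Since $\tau(\psi_m)$ sweeps $(0,1]$ as $m$ runs over $[0,1)$, for any $GHZ^R$ state there is a unique $m$ with $\tau(\psi_m)=\tau(\psi_{GHZ^R})$, and the theorem is equivalent to the single inequality $\mathcal{B}_{\max}(\psi_{GHZ^R})\le 1-\tau(\psi_{GHZ^R})$.

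The plan for this second inequality is to combine two facts. (i) For every two‑qubit state $\rho$ the Verstraete--Wolf relation gives $M(\rho)\le 1+C(\rho)^2$, hence $\mathcal{B}(\rho)=\max\{0,M(\rho)-1\}\le C(\rho)^2$. (ii) For a three‑qubit pure state Eq.~\eqref{eq:tangle} holds with equality, and $C^2_{A:BC}=2(1-\operatorname{tr}\rho_A^2)=4\det\rho_A\le1$ because $\rho_A$ acts on a qubit; thus $C^2_{AB}+C^2_{AC}=C^2_{A:BC}-\tau\le 1-\tau$, and since $\tau$ is permutation invariant the same identity at qubits $B$ and $C$ gives $C^2_{ij}\le 1-\tau$ for \emph{each} of the three pairs. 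Combining (i) and (ii), $\mathcal{B}_{\max}(\psi_{GHZ^R})\le\max_{ij}C^2_{ij}\le 1-\tau(\psi_{GHZ^R})$, exactly as required. (This in fact proves $\mathcal{B}_{\max}+\tau\le1$ for \emph{every} three‑qubit pure state, with the MBV family saturating it.) If one instead wants an argument internal to the paper and avoiding (i), the alternative is the brute‑force route: parametrise $\ket{\psi}_{GHZ^R}$ by $(\alpha,\beta,\gamma,\delta)$, write $M(\rho_{AB})$, $M(\rho_{AC})$, $M(\rho_{BC})$ and $\tau$ as explicit trigonometric functions of the four angles, and verify the polynomial inequality $M(\rho_{ij})+\tau\le2$ on the parameter box by constrained optimisation.

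Finally, with $\tau(\psi_{GHZ^R})=\tau(\psi_m)$, the two steps above give $\mathcal{B}_{\max}(\psi_{GHZ^R})\le 1-\tau(\psi_{GHZ^R})=1-\tau(\psi_m)=\mathcal{B}_{\max}(\psi_m)$, which is the claim. Along the clean route the only non‑elementary ingredient is the Verstraete--Wolf bound $M(\rho)\le1+C(\rho)^2$; I expect that to be the crux. Along the purely computational route the crux is instead the four‑variable constrained optimisation, together with the bookkeeping of which (necessarily unique) reduced state carries the Bell violation for a given choice of angles — presumably the reason the paper carries out the analytic treatment only on the three‑parameter $GHZ^R$ slice rather than on the full five‑parameter $GHZ$ class.
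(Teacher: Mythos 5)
Your proposal is correct, and your computations for $\ket{\psi}_m$ (marginals, correlation matrices, $\mathcal{B}_{\max}(\psi_m)=\tfrac{4m^2}{(1+m^2)^2}=1-\tau(\psi_m)$) reproduce exactly what the paper obtains in Appendices A and C; but your main route to the inequality $\mathcal{B}_{\max}(\psi_{GHZ^R})\le 1-\tau(\psi_{GHZ^R})$ is genuinely different from the paper's. The paper stays entirely inside the $GHZ^R$ parametrization: it writes $\tau(\psi_{GHZ^R})$ and the three Bell quantities $M(\rho_{AB}),M(\rho_{AC}),M(\rho_{BC})$ as explicit trigonometric functions of $(\alpha,\beta,\gamma,\delta)$, substitutes the equal-tangle condition into $\mathcal{B}_{\max}(\psi_m)$, and checks case by case (according to which marginal carries the violation) that the difference is bounded below by $1-s_{2\delta}^2\ge 0$ --- i.e.\ it is your ``brute-force'' fallback, organized so that no genuine four-variable optimisation is needed. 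Your primary argument instead derives the universal bound $\mathcal{B}_{\max}(\psi)+\tau(\psi)\le 1$ for \emph{every} three-qubit pure state by combining the Verstraete--Wolf bound $M(\rho)\le 1+C^2(\rho)$ (hence $\mathcal{B}(\rho)\le C^2(\rho)$) with the definition \eqref{eq:tangle}, the bound $C^2_{A:BC}=4\det\rho_A\le 1$, and the permutation invariance of the residual tangle, which gives $C^2_{ij}\le 1-\tau$ for every pair. This is strictly stronger than Theorem \ref{theorem3}: it covers the full five-parameter $GHZ$ class (including $\phi\neq 0$) and the $W$ class in one stroke, so it would turn the paper's numerically supported claim in Eq.~\eqref{tangle-comple-pure} for arbitrary pure states into a theorem (and, via the paper's convexity argument, Claim \ref{claim-tangle-mixed} as well), and it also subsumes Theorem \ref{theorem4} since $\tau(\psi_W)=0$. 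What the paper's approach buys is self-containedness --- only the Horodecki criterion \eqref{eq:max-chsh-violation} is used --- whereas your clean route hinges on the Verstraete--Wolf relation, a correct and standard result for all two-qubit mixed states but one the paper neither proves nor cites; if you invoke it, state it as an external lemma with a reference, since the entire argument rests on it.
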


\begin{proof} The tangle for $\ket{\psi}_m$ is given by (see Appendix \ref{app:tangle}),
\begin{align}
\label{eq:mdcc-tangle}
\tau(\psi_m) = 1 -\frac{4m^2}{(1+m^2)^2}.
\end{align}
From Eq.\eqref{eq:biv-mbv} and Eq.\eqref{eq:mdcc-tangle}, it follows that
\begin{equation}
\label{eq:mtangle-bv}
\mathcal{B}_{\max}({\psi_\m}) +\tau(\psi_\m)= 1. 
\end{equation}

 The tangle for $\ket{\psi}_{GHZ^R}$ is given by 
\begin{equation}
\label{eq:gtangle2}
\tau(\psi_{GHZ^R}) = \frac{s^2_\a s^2_\b s^2_\g s^2_{2\d}}{\left (1+c_\a c_\b c_\g s_{2\d} \right )^2}.
\end{equation}

 As $\tau(\psi_\m) = \tau(\psi_{GHZ^R})$, the Bell inequality violation for $\ket{\psi_m}$ can be written as
\begin{equation}
\label{eq:tabvm}
\mathcal{B}_{\max}({\psi_\m}) = 1-\frac{s^2_\a s^2_\b s^2_\g s^2_{2\d}}{\left (1+c_\a c_\b c_\g s_{2\d} \right )^2}.
\end{equation}

 Let us consider the case when the reduced bipartition $\rho_{BC}$ violates the Bell-CHSH inequality, which implies $\mathcal{B}_{\max}(\psi_{GHZ^R}) = \mathcal{B}(\rho_{BC})>0$. The Bell inequality violation for the density matrix $\rho_{BC}$ of $\ket{\psi_{GHZ^R}}$ is given by
\begin{equation}
\label{eq:ghz-bv-bc1}
	\mathcal{B}(\rho_{BC}) = \frac{\left (c_\a^2 -c_\b^2 -c_\g^2 + 2c_\b^2 c_\g^2\right ) s^2_{2 \d } -\mathcal{X}^2 }{ (1+\mathcal{X})^2}, 
\end{equation}
where $\mathcal{X}=c_\a c_\b c_\g s_{2\d}$.
Now,
\begin{equation*}
\begin{split}
&\mathcal{B}\left ({\psi_\m}\right ) -\mathcal{B}\left (\rho_{BC}\right ) \\ 
		 =\ & 1-\frac{s^2_\a s^2_\b s^2_\g s^2_{2\d}+\left (c_\a^2+2c_\b^2c_\g^2-c_\b^2-c_\g^2\right )s_{2\d}^2 - \mathcal{X}^2}{\left (1+\mathcal{X}\right )^2}\\
		 =\ & 1+\frac{\mathcal{X}^2 -s_{2\d}^2 +\left (s_\a^2 c_\b^2 s_\g^2 + c_\g^2 s_\a^2+c_\b^2s_\g^2+c_\g^2s_\b^2\right )s_{2\d}^2 }{(1+\mathcal{X})^2}.
\end{split}
\end{equation*}

 Note that,  $\mathcal{B}_{\max}({\psi_\m}) -\mathcal{B}(\rho_{BC})
\geq 1 - s_{2\d}^2 \geq  0$.

 Similarly, when $\mathcal{B}_{\max}(\psi_{GHZ^R}) = \mathcal{B}(\rho_{AB})$ or $\mathcal{B}_{\max}(\psi_{GHZ^R}) = \mathcal{B}(\rho_{AC})$ (See Appendix \ref{app:bv} for the Bell inequality violation of $\rho_{AC}$ and $\rho_{AB}$), one can also show that if $\tau(\psi_\m) = \tau(\psi_{GHZ^R})$, then the following inequalities respectively hold 
\begin{equation*}
\begin{split}
\mathcal{B}_{\max}({\psi_\m}) &\geq\mathcal{B}(\rho_{AC}),  \\
\mathcal{B}_{\max}({\psi_\m}) &\geq\mathcal{B}(\rho_{AB}) \,. 
\end{split}
\end{equation*}

 Hence the relation, $\mathcal{B}_{\max}(\psi_{m}) \geq \mathcal{B}_{\max}(\psi_{GHZ^R})$ if $\tau(\psi_\m) = \tau(\psi_{GHZ^R}).$
\end{proof}

 Let us now derive the relation between the bipartite Bell inequality violation of an arbitrary $W$ class state $\ket{\psi}_W$ and $\ket{\psi}_m$.
Note that the $W$ class states have zero tangle, and for $\ket{\psi}_m$ the tangle, $\tau(\psi_m)=0$, if and only if $m=1$. In which case we denote $\ket{\psi}_m$ as $\ket{\psi}_1$. Next, we prove the following theorem for the $W$ class states.

\begin{theorem}\label{theorem4}
Among all three-qubit pure states belonging to the $W$ class, $\ket{\psi}_1$ exhibits the maximum bipartite Bell inequality violation, i.e.,
\begin{equation}
\mathcal{B}_{\max}(\psi_1) \geq \mathcal{B}_{\max}(\psi_{W}),
\end{equation}
where $\ket{\psi}_1$, is the MBV state $\ket{\psi}_m$ for $m=1$.
\end{theorem}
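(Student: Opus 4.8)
The plan is to reduce the statement to two facts: that $\ket{\psi}_1$ sits at the top of the admissible range of $\mathcal{B}_{\max}$, and that every $W$-class state stays within that range. For the first, I would note that $\tau(\psi_1)=0$, which is immediate from Eq.~\eqref{eq:mdcc-tangle} evaluated at $m=1$ and is exactly why $\ket{\psi}_1$ belongs to the $W$ class; substituting this into the identity $\mathcal{B}_{\max}(\psi_m)+\tau(\psi_m)=1$ of Eq.~\eqref{eq:mtangle-bv} gives $\mathcal{B}_{\max}(\psi_1)=1$. One can also see this directly: for $m=1$ one of the three reduced two-qubit states of $\ket{\psi}_1$ is a pure maximally entangled state, so the two largest eigenvalues of its correlation matrix sum to $M=2$ and $\mathcal{B}=1$, while the other two reductions are products and contribute $0$.

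For the second fact, I would invoke the universal ceiling obeyed by the bipartite Bell-CHSH violation: for any two-qubit state $\rho$ one has $M(\rho)\le 2$, equivalently $\mathcal{S}(\rho)\le 2\sqrt{2}$ (Tsirelson's bound; equivalently, the two largest eigenvalues of $T_\rho^{T}T_\rho$ sum to at most $2$, a standard consequence of the positivity of a two-qubit density operator). Applying this to each of $\rho_{AB}$, $\rho_{BC}$, $\rho_{AC}$ of an arbitrary $W$-class state $\ket{\psi}_W$ gives $\mathcal{B}(\rho_{XY})\le 1$ for each pair, hence $\mathcal{B}_{\max}(\psi_W)\le 1=\mathcal{B}_{\max}(\psi_1)$, which is the claim. (The same argument in fact bounds $\mathcal{B}_{\max}$ by $1$ for every three-qubit pure state, the $W$-class hypothesis only fixing the relevant context.) I would also record the sharper remark that genuine tripartite entanglement precludes any reduced two-qubit state from being pure, hence from being maximally entangled, so $M(\rho_{XY})<2$ strictly and $\mathcal{B}_{\max}(\psi_W)<1$ for every genuinely entangled $W$-class state --- making $\ket{\psi}_1$ the unique limiting state of the $W$ class on which the bound is saturated, consistently with the complementarity picture.

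If instead one wants a self-contained derivation in the style of the proof of Theorem~\ref{theorem3}, avoiding the Tsirelson input, the route is to take the general parametrization of a $W$-class state, trace out one party at a time to obtain $\rho_{AB},\rho_{BC},\rho_{AC}$, build the correlation matrices $T_{\rho}$, diagonalise $T_{\rho}^{T}T_{\rho}$, and verify $m_1+m_2\le 2$ in each case, with a short split over which pair attains $\mathcal{B}_{\max}$. This is made tractable by the rigidity of $W$-class reductions --- each reduced state is rank $\le 2$ with a fixed vanishing pattern --- but the bookkeeping is where the real work lies: the three cases must be treated separately and the bound $m_1+m_2\le 2$ extracted from inequalities among the $W$-class parameters, just as $1-s_{2\d}^2\ge 0$ closed the $GHZ^R$ argument. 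A thematically cleaner variant combines the known two-qubit bound $\mathcal{B}(\rho)\le C^2(\rho)$ with monogamy of squared concurrence ($\tau\ge 0$), which for a $W$-class state yields even the pairwise $\mathcal{B}(\rho_{AB})+\mathcal{B}(\rho_{AC})\le C^2_{A:BC}\le 1$, and hence $\mathcal{B}_{\max}(\psi_W)\le 1$.
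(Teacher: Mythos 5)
Your proposal is correct and takes essentially the same route as the paper: the paper likewise obtains $\mathcal{B}_{\max}(\psi_1)=1$ (by setting $m=1$ in Eq.~\eqref{eq:biv-mbv}, equivalent to your use of the identity $\mathcal{B}_{\max}(\psi_m)+\tau(\psi_m)=1$) and then invokes the universal two-qubit bound $\mathcal{B}(\rho_{XY})\le 1$ of Horodecki \emph{et al.} to conclude $\mathcal{B}_{\max}(\psi_W)\le 1=\mathcal{B}_{\max}(\psi_1)$. The alternative arguments you sketch (explicit correlation-matrix bookkeeping, or the concurrence bound combined with monogamy) are sound but unnecessary extras relative to the paper's two-line proof.
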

\begin{proof}
For any two qubit state $\rho_{XY}$, $\mathcal{B}(\rho_{XY})\leq1$ \cite{horodecki-mrho}.
Putting $m=1$ in Eq.\eqref{eq:biv-mbv}, we get $\mathcal{B}_{\max}(\psi_1)=1$, which implies that 
$\ket{\psi}_1$ shows the maximum bipartite Bell inequality violation among all $W$ class states.
 Hence the result, 
\begin{equation*}
\mathcal{B}_{\max}(\psi_1) \geq \mathcal{B}_{\max}(\psi_{W}). \qedhere
\end{equation*}
\end{proof}

\begin{figure}
\begin{center}
\includegraphics[scale=0.6]{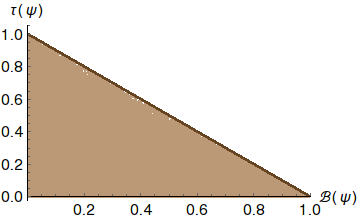}
\end{center}
\caption{Complementarity between the tangle, $\tau(\psi)$ and the maximum Bell inequality violation, $\mathcal{B}_{\max}(\psi)$ for $10^6$ number of Haar uniformly generated random three qubit pure states. The MBV states lies at the boundary. Both axes are dimensionless.}
\label{fig:ghz-mdcc-tangle}
\end{figure}

 From Theorem \ref{theorem3}. and \ref{theorem4}., for a three qubit pure state $\ket{\psi}$  either from the $GHZ^R$ or the $W$ class, we have $\mathcal{B}_{\max}(\psi) < \mathcal{B}_{\max}(\psi_m)$ when $\tau(\psi)=\tau(\psi_m)$.
Therefore, for the states in $GHZ^R$ and $W$ classes, one has the following complementary relation
\begin{align}
\label{tangle-comple-pure}
  \tau(\psi) + \mathcal{B}_{\max}(\psi)\leq 1. 
\end{align}

 In Fig.\ref{fig:ghz-mdcc-tangle}, the complementary relation between the tangle and the bipartite Bell inequality violation has been studied for Haar uniformaly 
generated random three qubit pure states. The numerical evidence suggests that the complementary relation holds in general for all the three qubit pure states.
From numerical evidences, we claim that:

 \textit{For \textbf{an arbitrary three qubit pure state} $\ket{\psi}$, the tangle $\tau(\psi)$, and the maximal bipartite Bell inequality
violation $\mathcal{B}_{\max}(\psi)$, follow the
 complementary relation given in Eq. \ref{tangle-comple-pure}.}

 {\bf Extension to mixed states:}
In what follows, we will prove that if the complementary relation between the tangle and the maximum bipartite Bell inequality violation holds for all three qubit pure 
states, then it also holds for three qubit mixed states. We will use the convexity property of the tangle and maximum bipartite Bell inequality violation by the reduced two
qubit states  respectively, of an arbitrary three qubit mixed states.

\begin{proof}
The tangle of mixed three qubit states has been defined by convex roof construction \cite{tangle-mixed}, i.e., the tangle of a mixed three qubit quantum state $\rho$ is 
given as
\begin{equation}
 \label{mixed-tangle}
 \tau(\rho)= \min_{\{p_i, |\psi_i\rangle\}} \sum_i p_i \tau(\psi_i),
\end{equation}
where the minimization is over the all pure state decomposition of $\rho$, such that $\rho=\sum_i p_i \op{\psi_i}$, where $p_i\geq 0$ and $\sum_i p_i=1.$ The convexity property of the tangle of mixed states 
follows from the definition itself. Now, we will prove the convexity property of the maximum bipartite Bell inequality violation $\mathcal{B}_{\max}(\rho)$ for three 
qubit mixed states.  The maximum Bell-CHSH value \cite{horodecki-mrho} for the reduced two qubit system $\rho_{AB}=\Tr_C [\rho_{ABC}]$ of a three qubit mixed state $\rho_{ABC}$ 
is given by
\begin{align}
 \mS(\rho_{AB}&) = \left |\Tr[(O_{AB}^{m}\otimes \mathbb{I}_C)\rho_{ABC}] \right| \nonumber \\
  				=& \left |\sum_i p_i\left (\Tr[\left (O_{AB}^{m}\otimes \mathbb{I}_C\right ) |\psi_{ABC}^i \rangle \langle \psi_{ABC}^i|]\right )\right |\,. 
\end{align}
Here, $O_{AB}^{m}$ is the optimal Bell-CHSH operator \cite{horodecki-mrho} for $\rho_{AB}$, $\mathbb{I}_C$ is the identity operator on the Hilbert space of $C$, and $\rho_{ABC}=\sum_i p_i|\psi_{ABC}^i \rangle \langle \psi_{ABC}^i|$ is an arbitrary pure state decomposition of $\rho_{ABC}$. Now using the subadditivity property of absolute value we have
\begin{align}
 \mS(\rho_{AB}) \leq \sum_i p_i&\left |\Tr[\left ( O_{AB}^{m} \otimes \mathbb{I}_C\right )|\psi_{ABC}^i \rangle \langle \psi_{ABC}^i|]\right |\nonumber\\
 =& \sum_i p_i\left |\Tr[ O_{AB}^{m}\rho_{AB}^i]\right |,
\end{align}
where $\rho_{AB}^i= \Tr_C[|\psi_{ABC}^i \rangle \langle \psi_{ABC}^i|]$. 
In the RHS of the above inequality, $O_{AB}^{m}$ is an non-optimal Bell-CHSH operator for the states $\rho_{AB}^i$'s. If it is replaced by $O_{AB}^{i,m}$ that is optimal 
for $\rho_{AB}^i$ i.e., gives the maximum Bell-CHSH value, for $\rho_{AB}^i$ for all $i$, then we have 
%
\begin{align}
 \label{eq:conv-chsh}
\mS(\rho_{AB})\leq  \sum_i p_i\left |\Tr[ O_{AB}^{i,m}\rho_{AB}^i]\right |\,, \nonumber \\
 \mS\left (\sum_ip_i\rho_{AB}^i\right) \leq \sum_i p_i\mS(\rho_{AB}^i). 
\end{align}
Similar relations can be shown for the other two subsystems $\rho_{AC}$ and $\rho_{BC}$ as well. 
Following Eq. \eqref{eq:max-chsh-violation}, the Ineq. \eqref{eq:conv-chsh} can be written as
\begin{align}
\label{eq:conv1}
\sqrt{M\left (\sum_i p_i\rho_{AB}^i\right )} \leq \sum_i p_i \left( \sqrt{M(\rho_{AB}^i)} \right )\,.
\end{align}
Note that $M(\rho)$ for a two qubit system $\rho$ is defined as in Eq. \eqref{eq:max-chsh-violation}. It is necessary to mention that Eq. \eqref{eq:conv1} is derived for a two qubit system $\rho_{AB}$ which is a convex combination of two qubit states $\rho_{AB}^i$'s, where $\rho_{AB}^i= \Tr_C[|\psi_{ABC}^i \rangle \langle \psi_{ABC}^i|]$. Therefore, $\rho_{AB}^i$'s are rank two states as $|\psi_{ABC}^i \rangle$'s are three qubit pure states. However, note that Eq. \eqref{eq:conv1} holds true for any convex combination $\rho_{AB}=\sum_jq_j\rho_{AB}^j$ ($q_j\geq 0$, $\sum_jq_j=1$), of an arbitrary two qubit state $\rho_{AB}$ as we have only used linearity property of trace and sub-additivity property of absolute value in deriving the Eq. \eqref{eq:conv1}. Therefore, Eq. \eqref{eq:conv1} implies convexity of the function  $\sqrt{M(\rho)}$. 
As square of a convex non-negative function is still convex \cite{conv-analysis}, it follows that
\begin{equation}
\label{eq:conv-mrho}
M\left (\sum_i p_i\rho_{AB}^i\right )\leq  \sum_i p_i \left[ M(\rho_{AB}^i) \right ]\,.
\end{equation}

Subtracting 1 from both sides we get
\begin{equation}
\label{eq:conv4}
M\left (\sum_i p_i\rho_{AB}^i\right )-1 \leq  \sum_i p_i [ M(\rho_{AB}^i)-1]\,.
\end{equation}
Hence, it follows that
\begin{align}
\label{eq:conv5}
\max\{0,M(\rho_{AB})-1\} \leq&  \max\{0, \sum_i p_i [ M(\rho_{AB}^i)-1]\} \nonumber\\ \leq& \sum_i p_i \max\{0,[ M(\rho_{AB}^i)-1]\}\nonumber\\ \mB (\rho_{AB}) \leq& \sum_i p_i\mB(\rho_{AB}^i)\,.
\end{align}
Similarly,
\begin{align}
 \mB (\rho_{BC}) \leq& \sum_i p_i\mB(\rho_{BC}^i), \\
 \mB (\rho_{AC}) \leq& \sum_i p_i\mB(\rho_{AC}^i).
\end{align}

Now, the maximum bipartite Bell Inequality violation, $\mB_{\max}(\rho_{ABC}) = \max\{ \mB_{AB}(\rho_{ABC}),\mB_{AC}(\rho_{ABC}),$ $ \mB_{BC}(\rho_{ABC})\}$.
Without loss of generality say $\mB_{\max}(\rho_{ABC})=\mB(\rho_{AB})$.
Again, 
\begin{align}
 \mB (\rho_{AB}) \leq& \sum_i p_i\mB(\rho_{AB}^i) \nonumber\\
 \leq&\sum_i p_i[\max\{\mB(\rho_{AB}^i),\mB(\rho_{BC}^i),\mB(\rho_{AC}^i)\}] \nonumber\\
 \leq&\sum_i p_i[\mB_{\max}(|\psi_{ABC}^i\rangle)].
\end{align}
Therefore, 
\begin{equation}
  \mB_{\max}(\rho_{ABC})\leq \sum_i p_i[\mB_{\max}(|\psi_{ABC}^i\rangle)].
\end{equation}
Hence, the maximum bipartite Bell inequality violation of a three qubit mixed state is convex.
%

 As the tangle and the maximum bipartite Bell inequality violation of a three qubit mixed state both are convex it implies that the complementary relation in Eq. \eqref{tangle-comple-pure} holds for all three qubit mixed states if it holds for all three qubit pure states. 
\end{proof}

Therefore, from the support of numerical study for entire $GHZ$ class states, we claim in the following that the complementary relation holds for all three qubit states, not necessarily pure.

\begin{claim}\label{claim-tangle-mixed}
For an arbitrary three-qubit state $\rho$, the tangle $\tau(\rho)$ and the maximum bipartite Bell Inequality violation $\mB_{\max}(\rho)$ follows the following complementary relation 
\begin{equation}
\tau(\rho) + \mB_{\max}(\rho) \leq 1.
\end{equation}
\end{claim}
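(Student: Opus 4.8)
The plan is to reduce Claim~\ref{claim-tangle-mixed} to the pure-state complementary relation in Eq.~\eqref{tangle-comple-pure}, using the two convexity facts already established: that $\tau(\rho)$ is convex (by its convex-roof definition, Eq.~\eqref{mixed-tangle}) and that $\mB_{\max}(\rho)$ is convex (the long computation culminating in $\mB_{\max}(\rho_{ABC})\leq\sum_i p_i\mB_{\max}(|\psi_{ABC}^i\rangle)$ for any pure-state decomposition). The logical structure is: assume the pure-state bound $\tau(\psi)+\mB_{\max}(\psi)\leq 1$ holds for every three-qubit pure state $\ket{\psi}$ — which is the numerically supported statement the paper is already prepared to grant — and deduce the mixed-state bound.

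First I would fix an arbitrary three-qubit mixed state $\rho$ and take the \emph{optimal} pure-state decomposition for the tangle, i.e.\ $\rho=\sum_i p_i\op{\psi_i}$ with $\tau(\rho)=\sum_i p_i\tau(\psi_i)$; such a decomposition exists by compactness of the set of decompositions (the minimum in Eq.~\eqref{mixed-tangle} is attained). For this particular decomposition, the convexity of $\mB_{\max}$ gives $\mB_{\max}(\rho)\leq\sum_i p_i\,\mB_{\max}(\psi_i)$. Then I would simply add the two relations:
\begin{align*}
\tau(\rho)+\mB_{\max}(\rho) &\leq \sum_i p_i\tau(\psi_i)+\sum_i p_i\,\mB_{\max}(\psi_i)\\
&= \sum_i p_i\bigl(\tau(\psi_i)+\mB_{\max}(\psi_i)\bigr)\\
&\leq \sum_i p_i\cdot 1 = 1,
\end{align*}
where the first line uses $\tau(\rho)=\sum_i p_i\tau(\psi_i)$ for the tangle (equality, since we chose the optimal decomposition) and $\mB_{\max}(\rho)\leq\sum_i p_i\mB_{\max}(\psi_i)$ for the Bell violation (inequality from convexity), and the last line applies the assumed pure-state complementary relation term by term. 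This yields $\tau(\rho)+\mB_{\max}(\rho)\leq 1$, which is exactly the claim.

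The only subtlety worth flagging is that the two convexity bounds must be applied with respect to the \emph{same} decomposition, and the argument works precisely because we can afford to take the decomposition that is optimal for $\tau$ while only needing an \emph{upper} bound (not optimality) for $\mB_{\max}$; since $\mB_{\max}$'s convexity bound holds for \emph{every} decomposition, it holds in particular for the tangle-optimal one. (Alternatively one could choose the $\mB_{\max}$-optimal decomposition and bound $\tau$ from above by convexity — either order works.) Thus there is really no serious obstacle here: all the heavy lifting was done in establishing convexity of $\mB_{\max}$ above, and what remains is the short monotonicity argument just sketched, conditioned on the (numerically supported) pure-state inequality~\eqref{tangle-comple-pure}.
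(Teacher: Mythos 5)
Your proposal is correct and follows essentially the same route as the paper: both rest on the convex-roof convexity of $\tau$ and the established convexity of $\mathcal{B}_{\max}$, combined with the (numerically supported) pure-state relation in Eq.~\eqref{tangle-comple-pure}; you merely make explicit the final combining step that the paper leaves implicit. The ``subtlety'' you flag is not really needed, since the convex-roof definition already gives $\tau(\rho)\leq\sum_i p_i\tau(\psi_i)$ for \emph{every} pure-state decomposition, so any common decomposition (optimal or not) yields $\tau(\rho)+\mathcal{B}_{\max}(\rho)\leq\sum_i p_i\bigl(\tau(\psi_i)+\mathcal{B}_{\max}(\psi_i)\bigr)\leq 1$.
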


\subsection{GGM versus Maximum Bell Inequality Violation} 
 Let us now derive the complementarity 
between the GGM and the bipartite Bell inequality violation for the $GHZ^R$ and $W$ class states. 
\begin{lemma}\label{lemma1}
If for a three qubit pure state $\ket{\psi}_{GHZ^R}$, the GGM is obtained from, say the $A:BC$ bipartite split, then the only 
reduced bipartite system of $\ket{\psi}_{GHZ^R}$ that can violate the Bell inequality is $\rho_{BC}$.
\end{lemma}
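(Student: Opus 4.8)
The plan is to turn both the hypothesis (the GGM of $\ket{\psi}_{GHZ^R}$ comes from the $A:BC$ cut) and the conclusion (only $\rho_{BC}$ can violate Bell--CHSH) into statements about the three parameters $c_\a,c_\b,c_\g$, and then to show that the former forces the latter. By monogamy of Bell inequality violation at most one reduced pair can violate, so it is enough to show that $\rho_{AB}$ and $\rho_{AC}$ do not.

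First I would rewrite the hypothesis. For a pure bipartition $X:\bar X$ of a three-qubit state, $\lambda_{X:\bar X}^2$ equals the largest eigenvalue of $\rho_X$, namely $\tfrac{1}{2}\bigl(1+\sqrt{1-C^2_{X:\bar X}}\bigr)$ with $C^2_{X:\bar X}=4\det\rho_X$, a strictly decreasing function of $C^2_{X:\bar X}$. Hence the GGM is attained at $A:BC$ iff $C^2_{A:BC}\le C^2_{B:AC}$ and $C^2_{A:BC}\le C^2_{C:AB}$. Computing $\rho_A$ from Eq.~\eqref{ghz-class-state} with $\phi=0$ gives $C^2_{A:BC}=4\det\rho_A=\dfrac{s_{2\d}^2\,s_\a^2(1-c_\b^2 c_\g^2)}{(1+\mathcal{X})^2}$, $\mathcal{X}=c_\a c_\b c_\g s_{2\d}$, and cyclically for $\rho_B,\rho_C$. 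Using $s_\a^2 c_\g^2-s_\g^2 c_\a^2=c_\g^2-c_\a^2$ and $s_\a^2-s_\g^2=c_\g^2-c_\a^2$, the comparison $C^2_{A:BC}\le C^2_{C:AB}$ collapses to $(c_\g^2-c_\a^2)s_\b^2\le 0$, i.e.\ $c_\g\le c_\a$; analogously $C^2_{A:BC}\le C^2_{B:AC}$ collapses to $c_\b\le c_\a$. So the hypothesis is precisely $c_\a=\max\{c_\a,c_\b,c_\g\}$.

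Next I would rule out the unwanted violations using the Bell--CHSH expressions for $\rho_{AB}$ and $\rho_{AC}$ from Appendix~\ref{app:bv}, which arise from Eq.~\eqref{eq:ghz-bv-bc1} by permuting $\a,\b,\g$. Since $s_{2\d}^2>0$ and $(1+\mathcal{X})^2>0$, the statement $\mathcal{B}(\rho_{AB})=0$ is equivalent to the polynomial inequality $c_\g^2(1-c_\a^2 c_\b^2)\le c_\a^2 s_\b^2+c_\b^2 s_\a^2$. As $1-c_\a^2 c_\b^2\ge 0$ and, from the previous step, $c_\g\le c_\a$, it suffices to prove the stronger statement with $c_\g^2$ replaced by $c_\a^2$; the difference of the two sides then factors as $c_\b^2(1-c_\a^2)^2=c_\b^2 s_\a^4\ge 0$, which settles it. Swapping $\b$ and $\g$ and using $c_\b\le c_\a$ gives $\mathcal{B}(\rho_{AC})=0$ the same way, the leftover term now being $c_\g^2 s_\a^4\ge 0$. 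Hence neither $\rho_{AB}$ nor $\rho_{AC}$ violates Bell--CHSH, leaving $\rho_{BC}$ as the only reduced two-qubit state of $\ket{\psi}_{GHZ^R}$ that can. The structural point worth noting is that the same estimate eliminates $\rho_{XY}$ exactly when the GGM-generating party $A$ lies in $\{X,Y\}$, and $A\notin\{B,C\}$ only for $\rho_{BC}$.

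The single delicate point is the first step: one must track the direction of the monotonicity (the largest Schmidt coefficient \emph{decreases} as the concurrence of the cut increases) and carry out the cancellations converting the three concurrence comparisons into the bare inequalities among $c_\a,c_\b,c_\g$; once that dictionary is in place, the second step is a one-line positivity check. Degenerate parameter values are harmless, since $\a,\b,\g\in(0,\pi/2]$ and $\d\in(0,\pi/4]$ keep $s_\a,s_\b,s_\g,s_{2\d}$ and the denominators strictly positive, and ties among the cosines only make further reduced states non-violating.
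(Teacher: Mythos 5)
Your proof is correct, and it reaches the conclusion by a genuinely different route at the decisive step. The first half coincides with the paper: there, the explicit largest eigenvalues $g_A,g_B,g_C$ of the single-qubit marginals are compared and the hypothesis is reduced to $(c_\a^2-c_\b^2)s_\g^2\ge 0$ and $(c_\a^2-c_\g^2)s_\b^2\ge 0$, i.e.\ to $c_\a=\max\{c_\a,c_\b,c_\g\}$, which is exactly your dictionary (you phrase it through the cut concurrences $4\det\rho_X$, but it is the same computation, and your cancellations are right). The second half is where you diverge: the paper never shows $M(\rho_{AB})\le 1$ and $M(\rho_{AC})\le 1$ outright; it establishes the orderings $M(\rho_{BC})\ge M(\rho_{AC})$ and $M(\rho_{BC})\ge M(\rho_{AB})$ and then invokes the no-go (monogamy) theorem for Bell--CHSH violation to conclude that only $\rho_{BC}$ can have $M>1$. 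You instead bound the two unwanted quantities by $1$ directly: after clearing the positive factor $s_{2\d}^2/(1+\mathcal{X})^2$, non-violation of $\rho_{AB}$ becomes $c_\g^2(1-c_\a^2 c_\b^2)\le c_\a^2 s_\b^2+c_\b^2 s_\a^2$, and with $c_\g\le c_\a$ the majorized inequality has slack $c_\b^2 s_\a^4\ge 0$ (and $c_\g^2 s_\a^4$ for $\rho_{AC}$); I verified these algebraic identities. What each approach buys: yours is self-contained, needing no appeal to Bell monogamy (your opening use of it becomes redundant once $\mathcal{B}(\rho_{AB})=\mathcal{B}(\rho_{AC})=0$ is proved) and avoiding the pairwise comparisons, where the paper incidentally has a slip (the second difference is written as $M(\rho_{BC})-M(\rho_{AC})$ when it should be $M(\rho_{BC})-M(\rho_{AB})$); the paper's route, in exchange, delivers the extra structural fact that $M(\rho_{BC})$ is the largest of the three Bell--CHSH values under the GGM hypothesis, which your bounds alone do not provide.
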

\begin{proof}
 The GGM of the state $\ket{\psi}_{GHZ^R}$, is given as $\mathcal{G}(\psi_{GHZ^R}) = 1-\max\{g_A,g_B, g_C\}$,
where $g_i$ is the maximum eigenvalue of the reduced state $\rho_i$ of $\ket{\psi}_{GHZ^R}$, with $i \in \{A,B,C\}$. 
These maximum eigenvalues, $g_i$'s, are given by (see Appendix \ref{app:ggm}),
\begin{align}
\label{ghz-l-a}
g_A = \frac{1}{2}\left (1 + \sqrt{1+\frac{ s_\a^2(c_\b^2 c_\g^2 -1)  s_{2\d}^2}{ (1+\mathcal{X})^2 }}\right ), \\
\label{ghz-l-b}
g_B = \frac{1}{2}\left (1 + \sqrt{1+\frac{ s_\b^2(c_\a^2 c_\g^2 -1)  s_{2\d}^2}{ (1+\mathcal{X})^2 }}\right ), \\
\label{ghz-l-c}
g_C = \frac{1}{2}\left (1 + \sqrt{1+\frac{ s_\g^2(c_\a^2 c_\b^2 -1)  s_{2\d}^2}{ (1+\mathcal{X})^2 }}\right ), 
\end{align}
where, $\mathcal{X}=c_\a c_\b c_\g s_{2\d}$. Let us suppose, without loss of generality,
that the GGM is obtained from the bipartite split $A:BC$. 
Consequently, $\max\{ g_A,g_B,g_C\} = g_A $ and $\mathcal{G}(\psi_{GHZ^R}) = 1-g_A$.

 Now, if $g_A>g_B$ we obtain the following condition
\begin{align}
s_\a^2\left (c_\b^2 c_\g^2 -1\right )&\geq  s_\b^2\left (c_\a^2 c_\g^2 -1\right ) \nonumber \\
\Rightarrow \left (c^2_\a - c^2_\b\right )s^2_\g &\geq 0, 
\label{eq:lemma1-cond1}
\end{align}
and similarly $g_A>g_C$ implies
\begin{equation}
\left (c^2_\a - c^2_\g\right )s^2_\b \geq 0.
\label{eq:lemma1-cond2}
\end{equation}

 The Bell-CHSH values $M(\rho_{BC})$, $M(\rho_{AC})$ and $M(\rho_{AB})$ for the reduced states $\rho_{BC}$, $\rho_{AC}$ and $\rho_{AB}$ of $\ket{\psi}_{GHZ^R}$ (see Appendix \ref{app:bv}), are given as
\begin{align}
M(\rho_{BC}) = 1+\frac{(c_\a^2 -c_\b^2 -c_\g^2 + 2c_\b^2 c_\g^2) s^2_{2 \d } -\mathcal{X}^2 }{ (1+\mathcal{X})^2} ,\\
M(\rho_{AC}) = 1+\frac{(c_\b^2 -c_\a^2 -c_\g^2 + 2c_\a^2 c_\g^2) s^2_{2 \d } -\mathcal{X}^2 }{ (1+\mathcal{X})^2} ,\\
M(\rho_{AB}) = 1+\frac{(c_\g^2 -c_\a^2 -c_\b^2 + 2c_\a^2 c_\b^2) s^2_{2 \d } -\mathcal{X}^2 }{ (1+\mathcal{X})^2}.
\end{align}
Now using Eq.\eqref{eq:lemma1-cond1}, we obtain
\begin{equation}
M(\rho_{BC})-M(\rho_{AC})=  \frac{2\left ( c^2_\a - c^2_\b\right) s_\g^2 s_{2\d}^2}{(1+\mathcal{X})^2} \geq 0.
\label{eq:lemma1-diff1}
\end{equation}
Similarly, Eq.\eqref{eq:lemma1-cond2} implies
\begin{equation}
M(\rho_{BC})-M(\rho_{AC})=  \frac{2\left ( c^2_\a - c^2_\g\right) s_\b^2 s_{2\d}^2}{(1+\mathcal{X})^2} \geq 0.
\label{eq:lemma1-diff2}
\end{equation}

It follows from the no-go theorem for the Bell Inequality violation that if $\rho_{BC}$ violates the Bell-CHSH inequality, 
then $M(\rho_{BC})>1$, $M(\rho_{AC})\leq 1$ and $M(\rho_{AB})\leq 1$. Hence, from Eq.\eqref{eq:lemma1-diff1} and Eq.\eqref{eq:lemma1-diff2}, we get $\mathcal{B}(\rho_{BC})> 0$ and $\mathcal{B}(\rho_{AC})=\mathcal{B}(\rho_{AB})=0$.

 Note that cyclic permutations of the variables $\a$, $\b$ and $\g$ enables one to obtain, say $g_B$ and $g_C$ from $g_A$.
 Therefore, a similar proof holds for the cases when the GGM is obtained from the other two bipartite splits. This completes the proof of the lemma.
\end{proof}

 While we have proved Lemma \ref{lemma1}. for $GHZ^R$ class, a numerical check for $10^6$ Haar uniformly generated random $GHZ$ class states indicates that this lemma holds for all the $GHZ$ class states.
\begin{theorem} \label{theorem1}
If $\ket{\psi}_{GHZ^R}$ and $\ket{\psi}_m$ be two three qubit pure states such that the corresponding GGM, $\mathcal{G}(\psi_{GHZ^R})$ and $\mathcal{G}(\psi_m)$ are equal then the maximal bipartite Bell inequality violations for the two states necessarily follow

\begin{equation} 
	\mathcal{B}_{\max}(\psi_m) \geq \mathcal{B}_{\max}(\psi_{GHZ^R}) .
\end{equation}

\end{theorem}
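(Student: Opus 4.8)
The plan is to run the same two-step argument as in Theorem~\ref{theorem3}, with the tangle replaced by the GGM, and to invoke Lemma~\ref{lemma1} to identify which reduced two-qubit state of $\ket{\psi}_{GHZ^R}$ actually carries the Bell violation. First I would compute the GGM of the MBV state. Writing $\ket{\psi}_m$ in the $A:BC$ and $C:AB$ cuts shows that $\rho_A$ and $\rho_C$ are maximally mixed, so $\mathcal{G}(\psi_m)=1-g_B$ with $g_B$ the largest eigenvalue of $\rho_B$; a $2\times2$ diagonalization of $\rho_B$ gives $g_B=(1+m)^2/\big(2(1+m^2)\big)$, hence
\begin{equation}
\mathcal{G}(\psi_m)=\frac{(1-m)^2}{2(1+m^2)} .
\end{equation}
Using $\mathcal{B}_{\max}(\psi_m)=4m^2/(1+m^2)^2$ (which follows from Eqs.~\eqref{eq:mdcc-tangle} and~\eqref{eq:mtangle-bv}) and the observation that $1-2\mathcal{G}(\psi_m)=2m/(1+m^2)\ge0$, this yields the functional relation
\begin{equation}
\mathcal{B}_{\max}(\psi_m)=\big(1-2\,\mathcal{G}(\psi_m)\big)^2 ,
\end{equation}
which is the GGM analogue of Eq.~\eqref{eq:mtangle-bv}.

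Now take a $GHZ^R$ state with $\mathcal{G}(\psi_{GHZ^R})=\mathcal{G}(\psi_m)$. By Lemma~\ref{lemma1}, if the GGM of $\ket{\psi}_{GHZ^R}$ comes from the $A:BC$ split, then $\rho_{BC}$ is the unique reduced state that can violate the Bell-CHSH inequality; hence either $\mathcal{B}_{\max}(\psi_{GHZ^R})=0$, in which case there is nothing to prove, or $\mathcal{B}_{\max}(\psi_{GHZ^R})=\mathcal{B}(\rho_{BC})$. In the latter case $\mathcal{G}(\psi_{GHZ^R})=1-g_A$ with $g_A$ as in Eq.~\eqref{ghz-l-a}; writing $g_A=\frac{1}{2}\big(1+\sqrt{1-R}\big)$ with $R=s_\a^2(1-c_\b^2c_\g^2)s_{2\d}^2/(1+\mathcal{X})^2$, the hypothesis together with the functional relation above gives $\mathcal{B}_{\max}(\psi_m)=(2g_A-1)^2=1-R$, so the assertion reduces to the single inequality $\mathcal{B}(\rho_{BC})\le 1-R$.

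To close this I would clear the positive factor $(1+\mathcal{X})^2$ from $\mathcal{B}(\rho_{BC})$, see Eq.~\eqref{eq:ghz-bv-bc1}, and use the identity
\begin{equation*}
\big(c_\a^2-c_\b^2-c_\g^2+2c_\b^2c_\g^2\big)+s_\a^2\big(1-c_\b^2c_\g^2\big)=s_\b^2 s_\g^2+c_\a^2 c_\b^2 c_\g^2
\end{equation*}
together with $\mathcal{X}^2=c_\a^2c_\b^2c_\g^2 s_{2\d}^2$; the inequality then collapses to $s_\b^2 s_\g^2 s_{2\d}^2\le(1+\mathcal{X})^2$, which is immediate because the left-hand side is at most $1$ and $\mathcal{X}\ge0$. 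The two remaining cases, where the GGM of $\ket{\psi}_{GHZ^R}$ is attained in the $B:AC$ or the $C:AB$ split — so that, by Lemma~\ref{lemma1}, the only reduced state that may violate the Bell-CHSH inequality is $\rho_{AC}$ or $\rho_{AB}$ respectively — go through verbatim, because Eqs.~\eqref{ghz-l-a}--\eqref{ghz-l-c} and the expressions for $M(\rho_{BC}),M(\rho_{AC}),M(\rho_{AB})$ are cyclic in $\a,\b,\g$; one merely relabels. Combining the three cases gives $\mathcal{B}_{\max}(\psi_m)\ge\mathcal{B}_{\max}(\psi_{GHZ^R})$ whenever the two GGMs agree.

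The main obstacle is bookkeeping rather than estimation: computing $\mathcal{G}(\psi_m)$ correctly — in particular noticing that the degeneracy of $\rho_A$ and $\rho_C$ forces the $B:AC$ cut to be the one achieving the GGM — and spotting the cancellation that turns $\mathcal{B}(\rho_{BC})\le 1-R$ into the trivial bound $s_\b^2 s_\g^2 s_{2\d}^2\le(1+\mathcal{X})^2$. Once the functional relation between $\mathcal{B}_{\max}(\psi_m)$ and $\mathcal{G}(\psi_m)$ is established and Lemma~\ref{lemma1} has discarded the two non-violating bipartitions, no genuine inequality remains to be proved — every subsequent step is an elementary trigonometric rearrangement followed by a bound of the form $s^2 s^2 s^2\le 1$.
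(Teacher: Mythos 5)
Your proposal is correct and follows essentially the same route as the paper: compute $\mathcal{G}(\psi_m)$ and $\mathcal{B}_{\max}(\psi_m)$ for the MBV state, establish the functional relation $\mathcal{B}_{\max}(\psi_m)=\big(1-2\mathcal{G}(\psi_m)\big)^2$ (identical to the paper's $4\big(\tfrac12-\mathcal{G}\big)^2$), invoke Lemma~\ref{lemma1} to reduce to comparing with $\mathcal{B}(\rho_{BC})$, and observe that the difference collapses to $1-s_\b^2 s_\g^2 s_{2\d}^2/(1+\mathcal{X})^2\ge 0$, with the remaining bipartitions handled by the same relabeling symmetry the paper uses. Your explicit treatment of the $\mathcal{B}_{\max}(\psi_{GHZ^R})=0$ case is a minor extra precaution, not a departure.
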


\begin{proof}
 The GGM (Appendix \ref{app:ggm}) and the maximum Bell inequality violation of the MBV state $\ket{\psi}_m$ (Appendix \ref{app:bv}), are given as
\begin{align}
\mathcal{G}(\psi_m) =\frac{1}{2} - \frac{m}{1+m^2}, \\
\mathcal{B}_{\max}(\psi_m) = \frac{4m^2}{(1+m^2)^2}, \label{eq:biv-mbv}
\end{align}
\begin{equation}
\label{eq:mggm-bv}
\mathcal{B}_{\max}(\psi_\m)  = 4\left (\frac{1}{2}-\mathcal{G}(\psi_\m)\right )^2.
\end{equation}

 Let us assume that for $\ket{\psi}_{GHZ^R}$, $\max\{ g_A,g_B,g_C\} = g_A$, then the GGM is obtained from $A:BC$ bipartite split and
\begin{multline}
\mathcal{G}(\psi_m) = \mathcal{G}(\psi_{GHZ^R}) = 1-g_A \\ = \frac{1}{2} \left(1 - \sqrt{1+\frac{ s_\a^2(c_\b^2 c_\g^2 -1)  s_{2\d}^2}{ (1+\mathcal{X})^2 }}\right).
\end{multline}
Then from Eq.\eqref{eq:mggm-bv}, we have
\begin{equation}
\mathcal{B}_{\max}(\psi_\m)_{A:BC} = 1 + \frac{  (c_\a^2 +c_\b^2 c_\g^2 -1) s_{2\d}^2 -\mathcal{X}^2 }{(1+\mathcal{X})^2},
\end{equation}
where the subscript $A:BC$ indicates that the GGM of $\ket{\psi}_{GHZ^R}$ is obtained from the bipartite split $A:BC$.
 From Lemma \ref{lemma1}., it follows that if GGM is obtained from $A:BC$ split, then only $\mathcal{B}(\rho_{BC})$ can be nonzero and thus $\mathcal{B}_{\max}(\psi_{GHZ^R}) = \mathcal{B}(\rho_{BC})$.
Therefore, to prove theorem \ref{theorem1}., we only have to show $\mathcal{B}_{\max}(\psi_m)_{A:BC}\geq\mathcal{B}(\rho_{BC})$.

 Now,
\begin{equation*}
\begin{split}
   &\mathcal{B}_{\max}(\psi_\m)_{A:BC} - \mathcal{B}(\rho_{BC}) \\
=\ & 1+ \frac{(c_\b^2 s_\g^2 - s_\g^2)s_{2\d}^2}{(1+\mathcal{X})^2} 
=\  1- \frac{s_\b^2 s_\g^2 s_{2\d}^2}{(1+\mathcal{X})^2}\geq 0,
\end{split}
\end{equation*}
as $\mathcal{X}\geq 0$.

Similarly, it can be proved that when $\mathcal{G}(\psi_{GHZ^R})=1-{g}_B$, or $\mathcal{G}(\psi_{GHZ^R})=1-{g}_C$, 
we have $\mathcal{B}_{\max}(\psi_\m)_{B:AC} \geq \mathcal{B}(\rho_{AC})$ or $\mathcal{B}_{\max}(\psi_\m)_{C:AB} \geq \mathcal{B}(\rho_{AB})$ respectively.

 Hence, $\mathcal{B}_{\max}(\psi_m)\geq\mathcal{B}_{\max}(\psi_{GHZ^R})$, when $\mathcal{G}(\psi_m) = \mathcal{G}(\psi_{GHZ^R})$.
\end{proof}

 Now we derive the complementary relation between the Bell inequality violation and GGM of the $W$ class states.
The $W$ state, $\frac{1}{\sqrt{3}}(\ket{001}+\ket{010}+\ket{100})$, is the representative for the $W$ class which is a set of all states that can be converted to the $W$ state using SLOCC with non-zero probability. 
The $W$ class states are given by
\begin{equation}
\label{w-class-state}
\ket{\psi}_W = \left (\sqrt{d}\ket{000} + \sqrt{a}\ket{001} + \sqrt{b}\ket{010} + \sqrt{c}\ket{100}\right ),
\end{equation}
where $a + b + c + d=1$ is the normalizing condition, and $a,b,c>0$, $d=1-(a+b+c)\geq0$ \cite{DurVidalCirac}. To establish the complementary relation let us 
 first prove the following lemma for $W$ class states.

\begin{lemma} \label{lemma2}
If for a three qubit pure state $\ket{\psi}_W$, the GGM is obtained from, say $A:BC$ split, then the only reduced bipartite system of $\ket{\psi}_W$ that can violate the Bell inequality is $\rho_{BC}$.
\end{lemma}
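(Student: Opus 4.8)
The plan is to mimic the proof of Lemma~\ref{lemma1}, which treated the $GHZ^R$ case: turn the hypothesis ``the GGM is attained at the $A:BC$ split'' into an inequality among the weights $a,b,c,d$, and then show that this inequality forces the Bell--CHSH parameter of $\rho_{BC}$ to dominate those of $\rho_{AC}$ and $\rho_{AB}$, after which the no-go theorem of Sec.~\ref{sec:bell} leaves $\rho_{BC}$ as the only possible violator.

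First I would diagonalise the single-qubit marginals of $\ket{\psi}_W$ (Appendix~\ref{app:ggm}): one gets $\mathcal{G}(\psi_W)=1-\max\{g_A,g_B,g_C\}$ with $g_A=\tfrac12(1+\sqrt{(1-2c)^2+4cd})$ and $g_B$, $g_C$ obtained by the replacements $c\to b$ and $c\to a$. Assuming the GGM comes from $A:BC$, i.e.\ $g_A\geq g_B$ and $g_A\geq g_C$, I would square the surds and use $a+b+c+d=1$ to collapse these to $a(b-c)\geq 0$ and $b(a-c)\geq 0$; since $a,b>0$ in a genuinely entangled $W$ state, these say exactly $b\geq c$ and $a\geq c$. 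These two inequalities are the only consequences of the hypothesis that the argument needs.

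Next I would bring in the Bell--CHSH values $M(\rho_{BC})$, $M(\rho_{AC})$, $M(\rho_{AB})$ from Appendix~\ref{app:bv}. The structural observation to make first is that the three reduced two-qubit states are images of one another under relabelling the three parties of $\ket{\psi}_W$, so $M(\rho_{AC})$ and $M(\rho_{AB})$ are the \emph{same} function of the weights as $M(\rho_{BC})$, evaluated after the substitutions $b\leftrightarrow c$ and $a\to b\to c\to a$ respectively. Writing $M(\rho_{BC})$ out from its correlation matrix, it decomposes as a polynomial of degree two in $a,b,c$ plus a surd term equal to half the spectral gap of a $2\times2$ block of $T_\rho^{T}T_\rho$; the key algebraic fact is that this surd term is a symmetric function of $\{a,b,c\}$ for fixed $d$, hence unchanged by the relabellings above. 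The surd parts therefore cancel in the differences, which reduce to the polynomial parts: $M(\rho_{BC})-M(\rho_{AC})=8a(b-c)\geq 0$ using $b\geq c$, and $M(\rho_{BC})-M(\rho_{AB})=8b(a-c)\geq 0$ using $a\geq c$. By the no-go theorem at most one of the three reduced states violates the Bell--CHSH inequality, so, $\rho_{BC}$ having the largest $M$, a violation ($M>1$) by $\rho_{AC}$ or $\rho_{AB}$ would force $\rho_{BC}$ to violate as well --- two violating subsystems, a contradiction --- whence $\mathcal{B}(\rho_{AC})=\mathcal{B}(\rho_{AB})=0$ and only $\rho_{BC}$ can violate. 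The cases in which the GGM is obtained from $B:AC$ or $C:AB$ follow by the cyclic symmetry of the argument under permuting $a,b,c$.

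The step I expect to be the main obstacle is establishing the symmetry of the surd term in $M(\rho_{BC})$ --- equivalently that the discriminant of the relevant $2\times2$ block of $T_\rho^{T}T_\rho$ is invariant under permutations of $a,b,c$ at fixed $d$ --- because this is not apparent term by term, yet it is exactly what makes the differences collapse to the transparent factors $8a(b-c)$ and $8b(a-c)$ rather than to some messy combination of surds one would then have to bound separately. A secondary point to check is that one stays in the regime where the remaining eigenvalue of $T_\rho^{T}T_\rho$ for $\rho_{BC}$, namely $4ab$, is among its two largest, so that $M(\rho_{BC})$ has the stated polynomial-plus-surd form; this holds whenever $c>0$, i.e.\ for every genuinely entangled $W$-class state.
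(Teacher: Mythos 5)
Your proposal is correct and follows essentially the same route as the paper's proof: the GGM hypothesis is reduced to $a(b-c)\geq 0$ and $b(a-c)\geq 0$, the differences of the Bell--CHSH parameters collapse (the symmetric surd $\sqrt{V}$ cancelling) to $M(\rho_{BC})-M(\rho_{AC})=8a(b-c)\geq 0$ and $M(\rho_{BC})-M(\rho_{AB})=8b(a-c)\geq 0$, and the no-go (monogamy) theorem then excludes violation by $\rho_{AC}$ and $\rho_{AB}$, with the remaining cases handled by permuting $a,b,c$. The only difference is presentational: you justify the cancellation of the surd and the eigenvalue ordering explicitly, which the paper delegates to its Appendix C expressions.
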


\begin{proof}
The maximum eigenvalues corresponding to the reduced systems $\rho_A$, $\rho_B$ and $\rho_C$ for $\ket{\psi}_W$ are respectively given by (see Appendix \ref{app:ggm}),
\begin{align}
w_A = \frac{1}{2}\left (1 + \sqrt{1- 4(a+b)c}\right ) ,\\
w_B = \frac{1}{2}\left (1 + \sqrt{1- 4(a+c)b}\right ) ,\\
w_C = \frac{1}{2}\left (1 + \sqrt{1- 4(b+c)a}\right ) ,
\end{align}
where $w_X$ is the maximum eigenvalue of the subsystem $\rho_X$ of state $\ket{\psi_W}$, with $X\in\{A,B,C\}$.
If we assume that the GGM is obtained from the bipartite split $A:BC$, then it implies that $w_A>w_B$ and $w_A>w_C$.

 From $w_A\geq w_B$ and $w_A\geq w_C$ we respectively get
\begin{align}
\label{eq:lemma2-cond1}
a \left ( b-c \right ) &\geq 0 , \\
\label{eq:lemma2-cond2}
b \left ( a-c \right ) &\geq 0 .
\end{align}

 For the reduced states $\rho_{BC}$, $\rho_{AC}$ and $\rho_{AB}$ of $\ket{\psi}_W$, the Bell-CHSH values $M(\rho_{BC})$, $M(\rho_{AC})$ and $M(\rho_{AB})$ are 
(see appendix C) respectively given as
\begin{align}
M(\rho_{BC}) = \frac{1}{2} + \frac{ 12a b -4 a c - 4 b c + \sqrt{V}}{ 2 }, \\
M(\rho_{AC}) = \frac{1}{2} + \frac{ 12a c -4 a b - 4 b c + \sqrt{V}}{ 2 }, \\
M(\rho_{AB}) = \frac{1}{2} + \frac{ 12b c -4 a b - 4 a c + \sqrt{V}}{ 2 },
\end{align}
where, $V = [(\sqrt{a}+\sqrt{b}-\sqrt{c})^2+d][(\sqrt{a}-\sqrt{b}+\sqrt{c})^2+d]  [(-\sqrt{a}+\sqrt{b}+\sqrt{c})^2 +d][(\sqrt{a}+\sqrt{b}+\sqrt{c})^2+d]$.

 Then from Eq.\eqref{eq:lemma2-cond1} and  Eq.\eqref{eq:lemma2-cond2}, one can show
\begin{align}
\label{eq:lemma2-diff1}
M(\rho_{BC})-M(\rho_{AC}) = 8a \left ( b-c \right ) &\geq 0 ,\\
\label{eq:lemma2-diff2}
M(\rho_{BC})-M(\rho_{AB}) = 8b \left ( a-c \right ) &\geq 0 .
\end{align}
The two equations above demonstrate that $M(\rho_{BC})$  is larger among the three. It follows from the no-go theorem for Bell inequality violation, that when $M(\rho_{BC})>1$, then $M(\rho_{AC})\leq1$ and $M(\rho_{AB})\leq1$. 
Therefore, it follows from Eq.\eqref{eq:biv-def}, that only $\mathcal{B}(\rho_{BC})>0$ and $\mathcal{B}(\rho_{AC})=\mathcal{B}(\rho_{AB})=0$, when $\mathcal{G}(\psi_W)=1-w_A$. Permutation of parameters $a$, $b$ and $c$ gives rise 
to the cases when the GGM is obtained from the other two bipartite splits, and follow a similar proof. This completes the proof of the lemma.
\end{proof}
  From the Lemma \ref{lemma1}., Lemma \ref{lemma2}. and the numerical studies for $GHZ$ class states we 
conjecture that for all entangled three qubit pure states, 
if the bipartite split $X:YZ$ gives the GGM, then among the three reduced bipartite states, only $\rho_{YZ}$ can exhibit the  Bell inequality violation.

\begin{theorem}\label{theorem2}
If $\ket{\psi}_W$ and $\ket{\psi}_m$ are two three qubit pure states such that the respective GGM, $\mathcal{G}(\psi_W)$ and $\mathcal{G}(\psi_m)$ are equal, then the maximal bipartite Bell inequality violations necessarily follow
\begin{equation}
\mathcal{B}_{\max}(\psi_m)\geq\mathcal{B}_{\max}(\psi_W).
\end{equation}
\end{theorem}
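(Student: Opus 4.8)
The plan is to mimic the structure of the proof of Theorem~\ref{theorem1}, but with the $W$-class formulas in place of the $GHZ^R$-class ones. First I would invoke Lemma~\ref{lemma2} to reduce the problem to a single reduced bipartition: assuming without loss of generality that the GGM of $\ket{\psi}_W$ is attained on the split $A:BC$, so that $\mathcal{G}(\psi_W)=1-w_A$ with $w_A=\frac{1}{2}(1+\sqrt{1-4(a+b)c})$, Lemma~\ref{lemma2} guarantees that the only reduced state of $\ket{\psi}_W$ that can violate Bell--CHSH is $\rho_{BC}$, hence $\mathcal{B}_{\max}(\psi_W)=\mathcal{B}(\rho_{BC})$. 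So it suffices to show $\mathcal{B}_{\max}(\psi_m)\geq\mathcal{B}(\rho_{BC})$ under the constraint $\mathcal{G}(\psi_m)=\mathcal{G}(\psi_W)$.

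Next I would translate the equality of the GGMs into a usable identity. Since $\mathcal{G}(\psi_m)=\frac{1}{2}-\frac{m}{1+m^2}$ and, by Eq.~\eqref{eq:mggm-bv}, $\mathcal{B}_{\max}(\psi_m)=4\left(\frac12-\mathcal{G}(\psi_m)\right)^2$, the hypothesis $\mathcal{G}(\psi_m)=1-w_A$ yields
\begin{equation*}
\mathcal{B}_{\max}(\psi_m)=4\left(w_A-\tfrac12\right)^2=\bigl(\sqrt{1-4(a+b)c}\bigr)^2=1-4(a+b)c.
\end{equation*}
I would then form the difference $\mathcal{B}_{\max}(\psi_m)-\mathcal{B}(\rho_{BC})$. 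Writing $\mathcal{B}(\rho_{BC})=M(\rho_{BC})-1$ (valid since we are in the regime $M(\rho_{BC})>1$; the case $M(\rho_{BC})\le 1$ is trivial because then $\mathcal{B}_{\max}(\psi_W)=0\le\mathcal{B}_{\max}(\psi_m)$), and using $M(\rho_{BC})=\frac12+\frac{12ab-4ac-4bc+\sqrt{V}}{2}$, the target reduces to showing
\begin{equation*}
1-4(a+b)c\;\ge\;\frac{12ab-4ac-4bc+\sqrt{V}-1}{2},
\end{equation*}
i.e.\ an inequality of the form $3-\text{(polynomial in }a,b,c,d)\ge\sqrt{V}$, subject to $a+b+c+d=1$, $a,b,c>0$, $d\ge0$, \emph{and} the Lemma~\ref{lemma2} side-conditions $a(b-c)\ge0$, $b(a-c)\ge0$ (which encode that $A:BC$ is the GGM split, so $c$ is the ``smallest'' of $a,b,c$ in the relevant sense).

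The main obstacle is the last step: $V$ is a product of four quadratic factors in $\sqrt{a},\sqrt{b},\sqrt{c}$, so verifying $\bigl(3-P(a,b,c)\bigr)^2\ge V$ after squaring is an algebraic inequality in three nonnegative variables of fairly high degree. I expect this to come out cleanly once the constraint $d=1-a-b-c$ is substituted and the square of the left side is expanded, with the difference $\bigl(3-P\bigr)^2-V$ factoring so that the conditions $c\le a$, $c\le b$ (equivalently Eqs.~\eqref{eq:lemma2-cond1}--\eqref{eq:lemma2-cond2}) make each surviving term manifestly nonnegative; this is the same phenomenon that made $\mathcal{B}_{\max}(\psi_m)_{A:BC}-\mathcal{B}(\rho_{BC})=1-\frac{s_\b^2 s_\g^2 s_{2\d}^2}{(1+\mathcal{X})^2}\ge0$ work in Theorem~\ref{theorem1}. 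Finally, I would note that a cyclic permutation of $a,b,c$ handles the cases $\mathcal{G}(\psi_W)=1-w_B$ and $\mathcal{G}(\psi_W)=1-w_C$, giving $\mathcal{B}_{\max}(\psi_m)\ge\mathcal{B}(\rho_{AC})$ and $\mathcal{B}_{\max}(\psi_m)\ge\mathcal{B}(\rho_{AB})$ respectively, and hence $\mathcal{B}_{\max}(\psi_m)\ge\mathcal{B}_{\max}(\psi_W)$ in all cases, completing the proof.
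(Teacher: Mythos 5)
Your proposal follows the paper's proof essentially step for step: invoke Lemma~\ref{lemma2} to reduce to the single split so that $\mathcal{B}_{\max}(\psi_W)=\mathcal{B}(\rho_{BC})$, use Eq.~\eqref{eq:mggm-bv} with $\mathcal{G}(\psi_m)=1-w_A$ to get $\mathcal{B}_{\max}(\psi_m)=1-4(a+b)c$, show the difference $\tfrac{3}{2}-\tfrac{12ab+4ac+4bc+\sqrt{V}}{2}$ is nonnegative, and handle the remaining splits by permuting $a,b,c$. The one piece you leave as an expectation, namely the algebraic inequality $3-12ab-4ac-4bc\ge\sqrt{V}$, is precisely the step the paper itself does not spell out either (it asserts only that the minimum of the difference is zero), so your attempt matches the published argument in both structure and level of detail.
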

\begin{proof}
 Let us first assume, that $\max\{w_A,w_B,w_C\} = w_A$ and hence, $\mathcal{G}(\psi_W) = 1-w_A$. Therefore,
\begin{multline}
\mathcal{G}(\psi_m) = \mathcal{G}(\psi_W) = 1-w_A \\= \frac{1}{2} \left (1- \sqrt{1-4(a+b)c}\right ).
\end{multline}
Employing Eq.\eqref{eq:mggm-bv}, the corresponding maximum Bell inequality violation for $\ket{\psi}_m$ is given as
\begin{align}
\mathcal{B}_{\max}(\psi_\m&)_{A:BC} = 1 - 4(a+b) c,
\end{align}
with the subscript indicating that the bipartite split $A:BC$ gives the GGM for $\ket{\psi}_W$.

 From Lemma \ref{lemma2}., it follows that only $\mathcal{B}(\rho_{BC})\geq0$, and hence, $\mathcal{B}_{\max}(\psi_{GHZ^R})=\mathcal{B}(\rho_{BC})$. 
Thus, we only need to show that $\mathcal{B}_{\max}(\psi_m)_{A:BC}>\mathcal{B}(\rho_{BC})$:

Now,
\begin{equation*}
\begin{split}
   \mathcal{B}_{\max}(\psi_\m)_{A:BC} &- \mathcal{B}(\rho_{BC}) \\
=\ & \frac{3}{2}-\frac{4ac + 4bc +12ab + \sqrt{V}}{2}, \\
\end{split}
\end{equation*}
where $V = [(\sqrt{a}+\sqrt{b}-\sqrt{c})^2+d][(\sqrt{a}-\sqrt{b}+\sqrt{c})^2+d]  [(-\sqrt{a}+\sqrt{b}+\sqrt{c})^2 +d][(\sqrt{a}+\sqrt{b}+\sqrt{c})^2+d]$.

 It can be shown that the minimum of $\mathcal{B}_{\max}(\psi_\m)_{A:BC} - \mathcal{B}(\rho_{BC})$ is zero, and therefore $\mathcal{B}_{\max}(\psi_\m)_{A:BC} \geq \mathcal{B}(\rho_{BC})$. 
The same can be proved for the cases when the other two bipartite splits give the GGM, following the cyclic order of the  parameters $a$, $b$ and $c$.
\begin{figure}[tbp!]
\begin{center}
\includegraphics[scale=0.6]{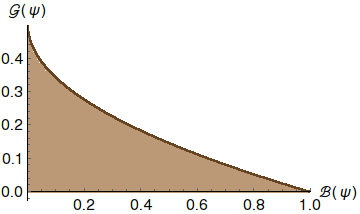}
\end{center}
\caption{Complementarity between the GGM, $\mathcal{G}(\psi)$ and the maximum Bell inequality violation, $\mathcal{B}_{\max}(\psi)$ for $10^6$ number of Haar uniformly generated random three qubit pure states. The MBV states lies at the boundary. Both axes are dimensionless.}
\label{fig:ggm-bv-ghz-w}
\end{figure}
\end{proof}

 From Theorem \ref{theorem1}. and \ref{theorem2}., for a state $\ket{\psi}$ belonging to either of the $GHZ^R$ or $W$ classes, we have  that $\mathcal{B}_{\max}(\psi) \leq \mathcal{B}_{\max}(\psi_m)$, when $\mathcal{G}(\psi) = \mathcal{G}(\psi_m)$. Hence, for the states in $GHZ^R$ and $W$ classes, one can show the following complementary relation
\begin{align}
4\mathcal{G}(\psi)\big(1-\mathcal{G}(\psi)\big)+\mathcal{B}_{\max}(\psi) \leq 1.
\end{align} 

 We study the complementary relation between the GGM and the maximum Bell inequality violation for Haar uniformly generated random genuinely entangled pure three qubit states in Fig.\ref{fig:ggm-bv-ghz-w}. Numerical study (appendix \ref{numerics}) shows that it holds for all the $GHZ$ class states. Therefore, we propose that the following complementary relation holds for all three qubit pure states.
\begin{claim}
For any three qubit pure state $\ket{\psi}$ the GGM, $\mathcal{G}(\psi)$ and the maximal bipartite Bell inequality violation, $\mathcal{B}_{\max}(\psi)$ obey the following complementary relation
\begin{equation}
4\mathcal{G}(\psi)\big(1-\mathcal{G}(\psi)\big)+\mathcal{B}_{\max}(\psi) \leq 1.
\label{corollary1}
\end{equation}
\end{claim}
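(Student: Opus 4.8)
The plan is to reduce the claim to the two results already established, Theorem~\ref{theorem1} and Theorem~\ref{theorem2}, together with the single algebraic identity \eqref{eq:mggm-bv}, which states that the MBV family saturates the proposed bound. Writing $g=\mathcal{G}(\psi_m)$, Eq.~\eqref{eq:mggm-bv} gives $\mathcal{B}_{\max}(\psi_m)=4\bigl(\tfrac12-g\bigr)^2=1-4g+4g^2$, so that $4g(1-g)+\mathcal{B}_{\max}(\psi_m)=1$ identically; thus the MBV states trace out precisely the boundary curve of \eqref{corollary1}, and the content of the claim is that no other pure three qubit state lies above this curve.

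The first step is to dispose of the pure states that are not genuinely multipartite entangled. For such a state $\mathcal{G}(\psi)=0$, hence $4\mathcal{G}(\psi)\bigl(1-\mathcal{G}(\psi)\bigr)=0$, and since $\mathcal{B}(\rho_{XY})\le1$ for every two qubit state \cite{horodecki-mrho}, one has $\mathcal{B}_{\max}(\psi)\le1$ and \eqref{corollary1} holds trivially. For the genuinely entangled pure three qubit states I would use the dichotomy into the $GHZ$ and $W$ classes, which are disjoint and exhaustive \cite{DurVidalCirac}, and handle the two classes separately.

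The second step is a monotonicity observation that makes the matching with MBV states legitimate. In any bipartite split of a three qubit pure state one side is a single qubit, whose reduced state is $2\times2$ with largest eigenvalue $\ge1/2$; hence $\mathcal{G}(\psi)\in[0,\tfrac12]$. On this interval $g\mapsto4g(1-g)$ is strictly increasing, while the MBV GGM $\mathcal{G}(\psi_m)=\tfrac12-\tfrac{m}{1+m^2}$ sweeps $[0,\tfrac12]$ monotonically as $m$ runs over $[0,1]$. Therefore, given any pure $\ket{\psi}$, there is a unique MBV state $\ket{\psi_m}$ with $\mathcal{G}(\psi_m)=\mathcal{G}(\psi)$. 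If $\ket{\psi}$ is in the $W$ class, Theorem~\ref{theorem2} gives $\mathcal{B}_{\max}(\psi)\le\mathcal{B}_{\max}(\psi_m)$, so $4\mathcal{G}(\psi)\bigl(1-\mathcal{G}(\psi)\bigr)+\mathcal{B}_{\max}(\psi)\le4\mathcal{G}(\psi_m)\bigl(1-\mathcal{G}(\psi_m)\bigr)+\mathcal{B}_{\max}(\psi_m)=1$; if $\ket{\psi}$ is in the $GHZ^R$ subclass the same argument runs with Theorem~\ref{theorem1}. This proves \eqref{corollary1} on $W\cup GHZ^R$.

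The remaining step, and the main obstacle, is to drop the restriction $\phi=0$, i.e.\ to extend Theorem~\ref{theorem1} together with its supporting Lemma~\ref{lemma1} to the full five-parameter $GHZ$ class \eqref{ghz-class-state}. Concretely one would recompute, for $\phi\neq0$, the single-qubit eigenvalues $g_A,g_B,g_C$ and the Bell--CHSH quantities $M(\rho_{AB}),M(\rho_{AC}),M(\rho_{BC})$ as functions of all of $\a,\b,\g,\d,\phi$ through the $\phi$-dependent normalization $K=(1+c_\a c_\b c_\g c_\phi s_{2\d})^{-1}$, re-establish the Lemma~\ref{lemma1} alignment that the GGM split $X{:}YZ$ forces $\rho_{YZ}$ to be the unique Bell-violating reduction, and finally show nonnegativity of $\mathcal{B}_{\max}(\psi_m)_{X:YZ}-\mathcal{B}(\rho_{YZ})$ when the GGMs are matched. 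The difficulty is that these expressions become unwieldy and the sign analysis no longer collapses to a manifest nonnegative square as it does for $GHZ^R$ (where the difference reduced to $1-s_\b^2 s_\g^2 s_{2\d}^2/(1+\mathcal{X})^2\ge0$); this is exactly why the authors verify it numerically over $10^6$ Haar-random $GHZ$ states and state the result as a claim. Granting that extension, the $GHZ$-class case closes exactly as the $GHZ^R$ case, and \eqref{corollary1} follows for all three qubit pure states.
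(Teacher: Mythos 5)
Your proposal follows essentially the same route as the paper: it reduces the claim to Theorems~\ref{theorem1} and~\ref{theorem2} together with the saturation identity \eqref{eq:mggm-bv} for the MBV family, and correctly identifies that the only missing piece is the extension from the $GHZ^R$ subclass ($\phi=0$) to the full $GHZ$ class, which the paper itself does not prove analytically but supports only by the Haar-random numerical study---which is precisely why the statement appears as a Claim rather than a theorem. Your added observations (the trivial $\mathcal{G}(\psi)=0$ case and the explicit matching of GGM values via the range $[0,\tfrac12]$ swept monotonically by $\mathcal{G}(\psi_m)$) are correct details that the paper leaves implicit.
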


\subsection{Discord Monogamy Score versus Maximum Bell Inequality Violation}
\begin{figure}[tbhp!]
\begin{center}
\includegraphics[scale=0.57]{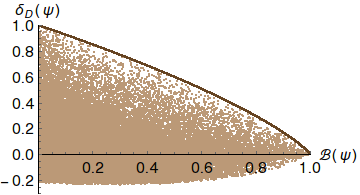}
\end{center}
\caption{Complementarity between the discord monogamy score,  $\delta_D(\psi)$ and the maximum Bell inequality violation, $\mathcal{B}_{\max}(\psi)$ for $10^6$ number of Haar uniformly generated random three qubit pure states. The MBV states lie at the boundary. Both axes are dimensionless.}
\label{fig:dms-bv-ghz-w}
\end{figure}
 Moreover, we study the complementary relation between the discord monogamy score (DMS), and the bipartite Bell inequality violation numerically (see Fig.\ref{fig:dms-bv-ghz-w}). 
The DMS is not always non-negative as the discord is not monogamous. The MBV states lie at the boundary of the complementary relation in this case too.

\section{Conclusions}
\label{conclu}
 Quantifying nonlocality and finding a relation between nonlocality and quantum correlations is an important 
yet challenging task in multipartite and higher dimensional quantum systems. In recent past it has been observed that 
correlation statistics of two body systems can be very fruitful in inferring the multipartite properties of a composite 
quantum system. In this article we find a single parameter class of states called the MBV class that exhibits the 
maximum Bell inequality violation for reduced two qubit systems for a fixed amount of genuine tripartite correlation. 
The measures that we have chosen to quantify genuine tripartite correlation, belong to both the 
entanglement separability paradigm  (the tangle and the GGM) as well as the information 
theoretic paradigm (DMS).
This result in turn establishes a complementary relation between the Bell-CHSH violation by the reduced bipartite systems 
and the genuine quantum correlations (the tangle) in all three qubit states, with the MBV class lying at the boundary of the 
complementary relation. The complementary relation between the Bell-CHSH violation by the reduced bipartite systems 
and the genuine quantum correlations  also holds for genuine correlation measure the GGM and the DMS in three qubit pure states, with the MBV class lying at the boundary of the 
complementary relation.
The complementary relation suggests that for all the three measures, the Bell inequality violation of the reduced two party system comes at the cost of genuine tripartite correlations present in the whole system.


\begin{acknowledgments}
 AM thanks Dagmar Bruss, Michael Horne, Marcus Huber, Hermann Kampermann, Miguel Navascues, Ujjwal Sen and Jochen Szangolies for fruitful discussions and suggestions. PP acknowledges IIIT-Hyderabad for support during the MS project. AM acknowledges IIIT-Hyderabad for hospitality and support during academic visit and research fellowship of DAE, Govt. of India.
\end{acknowledgments}

\appendix
\section{Expressions for tangle}
\label{app:tangle}
 For a three qubit pure state $\ket{\psi}_{ABC} = \sum_{i,j,k=0}^1 c_{ijk} \ket{ijk}$, the 3-tangle or the residual tangle \cite{Coffmankunduwotters} is given as
\begin{align}
\tau_{ABC} &= \tau_{A:BC}-\tau_{AB}-\tau_{AC}, \nonumber\\
		   &= 4\left\vert s_1 -2s_2+4s_3\right \vert \label{eq:tangle-def},
\end{align}
where,
\begin{align}
s_1& = \cc{000}\cc{111}+\cc{001}\cc{110}+\cc{010}\cc{101}+\cc{100}\cc{011}, \\
s_2& = \c{000}\c{111}\left ( \c{011}\c{100}+\c{101}\c{010}+\c{110}\c{001} \right ) \nonumber \\
      &+\c{011}\c{100}\left ( \c{101}\c{010}+\c{110}\c{001} \right ) \nonumber \\
      &+\c{101}\c{010}\c{110}\c{001}, \\
s_3& = \c{000}\c{110}\c{101}\c{011} + \c{111}\c{001}\c{010}\c{100}.
\end{align}

\subsection{GHZ Class}
For $\ket{\psi}_{GHZ} = \sqrt{K} \left( c_{\d} \ket{000} + s_{\d} e^{i \phi} \ket{\varphi_\a}\ket{\varphi_\b}\ket{\varphi_\d} \right)$ (Eq.\eqref{ghz-class-state}), $s_1$, $s_2$ and $s_3$ are respectively given as
\begin{align}
s_1 & = e^{2i\phi} t_2 \frac{\left ( c_\d^2 + 2 e^{2i \phi} t_1\right )}{(1+\mathcal{X}_\phi)^2}, \\
s_2 & = 3e^{3i\phi} t_1 t_2 \frac{\left ( c_\d+2e^{i\phi}t_1\right )}{(1+\mathcal{X}_\phi)^2}, \\
s_3 & = e^{3i\phi} t_1 t_2 \frac{\left ( c_\d+2e^{i\phi}t_1\right )}{(1+\mathcal{X}_\phi)^2},  
\end{align}
where, $t_1 = c_\a c_\b c_\g s_\d$, $t_2 = s_\a^2s_\b^2s_\g^2s_\d^2$ and $\mathcal{X}_\phi = c_\a c_\b c_\g c_\phi s_{2\d}$. 
%
 Hence, the tangle of $\ket{\psi}_{GHZ}$ is given as
\begin{equation}
\label{eq:tangle-ghz}
\tau_{GHZ} = \frac{s^2_\a s^2_\b s^2_\g s^2_{2\d}}{(1+c_\a c_\b c_\g c_\phi s_{2\d})^2}
\end{equation}


\subsection{Maximally Bell inequality violating (MBV) class}
 For the MBV states $\ket{\psi_\m} = \frac{\ket{000} + \m(\ket{010}+\ket{101}) + \ket{111}}{\sqrt{2+2\m^2}}$ (Eq.\eqref{mdcc-class-state}), $s_1$, $s_2$ and $s_3$ are respectively given by
\begin{align}
s_1 &= \frac{1+m^4}{4(1+m^2)^2}, \\
s_2 &= \frac{m^2}{4(1+m^2)^2}, \\
s_3 &= 0,
\end{align}
and the tangle is,
\begin{equation}
\label{eq:tangle-mdcc}
\tau_m = 1 - \frac{4m^2}{(1+m^2)^2}.
\end{equation}

\section{Expressions for GGM}
\label{app:ggm}
 The GGM for a three qubit pure state $\ket{\psi}_{ABC}$, is calculated as
\begin{equation*}
\mathcal{G}(\psi_{ABC}) = 1 - \max \{\lambda_A,\lambda_B,\lambda_C\},
\end{equation*}
where $\lambda_A$, $\lambda_B$ and $\lambda_C$ are the largest eigenvalues of the reduced systems $\rho_A$, $\rho_B$ and $\rho_C$ respectively. 
\subsection{GHZ Class}
 For $\ket{\psi}_{GHZ}$, the eigenvalues of the reduced systems $\rho_A$, $\rho_B$ and $\rho_C$ are respectively given as follows.\\
For subsystem $\rho_A$,
\begin{align*}
g_A^\pm = \frac{1}{2}\left (1 \pm \sqrt{1+\frac{ s_\a^2(c_\b^2 c_\g^2 -1)  s_{2\d}^2}{ (1+\mathcal{X_\phi})^2 }}\right ) ,
\end{align*}
For subsystem $\rho_B$,
\begin{align*}
g_B^\pm = \frac{1}{2}\left (1 \pm \sqrt{1+\frac{ s_\b^2(c_\a^2 c_\g^2 -1)  s_{2\d}^2}{ (1+\mathcal{X_\phi})^2 } }\right ) ,
\end{align*}
For subsystem $\rho_C$,
\begin{align*}
g_C^\pm = \frac{1}{2}\left (1 \pm \sqrt{1+\frac{ s_\g^2(c_\a^2 c_\b^2 -1)  s_{2\d}^2}{ (1+\mathcal{X_\phi})^2 } }\right ) .
\end{align*}
%
Here $\mathcal{X_\phi}=c_\a c_\b c_\g c_\phi s_{2\d}$. 
In each of the subsystem, it is clear that the eigenvalue $g^+_X$ is largest, and will be relabeled $g_X$, where $X\in\{A, B, C\}$. 
\subsection{W Class}
 For $\ket{\psi}_W = \sqrt{d}\ket{000} + \sqrt{a}\ket{001} + \sqrt{b}\ket{010} + \sqrt{c}\ket{100}$, 
the eigenvalues of the reduced subsystems are as follows. For subsystem $\rho_A$,
\begin{align*}
w_A^\pm = \frac{1}{2}\left (1 \pm \sqrt{1- 4(a+b)c}\right ) ,
\end{align*}
For subsystem $\rho_B$,
\begin{align*}
w_B^\pm = \frac{1}{2}\left (1 \pm \sqrt{1- 4(a+c)b}\right ) ,
\end{align*}
For subsystem $\rho_C$,
\begin{align*}
w_C^\pm = \frac{1}{2}\left (1 \pm \sqrt{1- 4(b+c)a}\right ) ,
\end{align*}
Again, the eigenvalue $w^+_X$ in each pair is largest of the two, where $X\in\{A, B, C\}$.
\subsection{MBV class}
 For $\ket \psi_m$, the eigenvalues for $\rho_A$ and $\rho_C$ are $\{1/2,1/2\}$, and eigenvalues of $\rho_B$ are given by,
\begin{align*}
\lambda_B^\pm=\frac{1}{2}\pm\frac{m}{1+m^2} .
\end{align*}
Clearly, $\lambda_B^+$ is the greatest among all the eigenvalues. Hence, the GGM is given by
\begin{equation}
\label{eq:ggm-mdcc}
\mathcal{G}(\psi_m) = 1 - \lambda_B^+ = \frac{1}{2}-\frac{m}{1+m^2}.
\end{equation}

\section{Expressions for the Bell inequality violation}
\label{app:bv}
 Here we give the analytical expressions for the Bell inequality violations for $GHZ^R$ and $W$ classes of states, calculated as per the method prescribed by Horodecki \textit{et al.} \cite{horodecki-mrho}. The Bell inequality violation is quantified as $\mathcal{B}(\rho_{AB}) = \max\{0,M(\rho)-1\}$, where $M(\rho_{AB})$ is the sum of the two largest eignevalues of the matrix, $T_{AB}^TT_{AB}$. Here, $T_{AB}$ is the correlation matrix such that $(T_{AB})_{ij} = \mbox{Tr}[(\sigma_i\otimes\sigma_j)\rho_{AB}]$.
As we are dealing with three qubit pure states, the Bell inequality violation is calculated for each of the three reduced two qubit systems, $\rho_{AB}$, $\rho_{BC}$ and $\rho_{AC}$.

\subsection{$GHZ^R$ Class}
 For the state $\ket{\psi}_{GHZ^R}$, the eigenvalues of the matrix $T^T_{BC}T_{BC}$ are as follows,
\begin{align}
\lambda_1 &= \frac{c_\a^2 s_\b^2s_\g^2s_\d^2}{(1+\mathcal{X})^2} ,\\
\lambda_2 &= \mathcal{A}_{BC} - \frac{\sqrt{\mathcal{D}}}{2(1+\mathcal{X})^2} , \\
\lambda_3 &= \mathcal{A}_{BC} + \frac{\sqrt{\mathcal{D}}}{2(1+\mathcal{X})^2} ,
\end{align}
where, $\mathcal{X} = c_\a c_\b c_\g c_{2\d}$, and 
\begin{equation}
\mathcal{A}_{BC} = \frac{1}{2}+\frac{(c_\a^2 -c_\b^2 -c_\g^2 + 2c_\b^2 c_\g^2) s^2_{2 \d } -\mathcal{X}^2 }{ 2(1+\mathcal{X})^2}, 
\end{equation}
\begin{equation}
\label{eq:d-term}
\begin{split}
\mathcal{D} = 1 +& 4\mathcal{X} + s_{2\d}^2\big[ -s_{2\d}^2 \sum c_\a +( 4\mathcal{X} - 2 c_{2\d}^2 )\sum c_\a^2 \\&+ 2 (1+c_{2\d}^2)\sum c_\a^2c_\b^2 -8\mathcal{X} + 4\mathcal{X}^2\big],
\end{split}
\end{equation}
with $\sum c_\a = (c_\a+c_\b+c_\g)$, $\sum c_\a^2 = (c_\a^2+c_\b^2+c_\g^2)$ and $\sum c_\a^2c_\b^2 = (c_\a^2c_\b^2 +c_\a^2c_\g^2+ c_\b^2c_\g^2)$.

 It can be shown that the eigenvalues are in the following order, $\lambda_1<\lambda_2<\lambda_3$. Thus, the Bell-CHSH value, $M(\rho_{BC}) = \lambda_3+\lambda_2 = 2\mathcal{A}_{BC}$,
\begin{equation}
M(\rho_{BC}) = 1+\frac{(c_\a^2 -c_\b^2 -c_\g^2 + 2c_\b^2 c_\g^2) s^2_{2 \d } -\mathcal{X}^2 }{ (1+\mathcal{X})^2} .
\label{eq:ghz-bv-bc}
\end{equation}

 Similarly, for both the subsystems $\rho_{AC}$ and $\rho_{AB}$, the eigenvalues of $T^T_{AC}T_{AC}$ and $T^T_{AB}T_{AB}$ respectively, follow the ordering $\lambda_1<\lambda_2<\lambda_3$. The corresponding Bell-CHSH values are given as
\begin{equation}
M(\rho_{AC}) = 1+\frac{(c_\b^2 -c_\a^2 -c_\g^2 + 2c_\a^2 c_\g^2) s^2_{2 \d } -\mathcal{X}^2 }{ (1+\mathcal{X})^2} .
\label{eq:ghz-bv-ac} 
\end{equation}
\begin{equation}
M(\rho_{AB}) = 1+\frac{(c_\g^2 -c_\a^2 -c_\b^2 + 2c_\a^2 c_\b^2) s^2_{2 \d }-\mathcal{X}^2 }{ (1+\mathcal{X})^2} .
\label{eq:ghz-bv-ab}
\end{equation}
 There is a inherent exchange symmetry in the expressions, where in, if we swap parameters $\a$ and $\b$ in $M(\rho_{BC})$ we obtain $M(\rho_{AC})$. Similarly, swapping $\b$ and $\g$ in $M(\rho_{AC})$ gives $M(\rho_{AB})$.
 For each subsystem, we define the Bell inequality violation as  $\mathcal{B}(\rho_{XY}) = \max\{0,M(\rho_{XY})-1\}$.


\subsection{W Class}
 For the state $\ket{\psi}_{W}$, 
the eigenvalues for the matrix $T_{BC}^TT_{BC}$ are given as,
\begin{align}
\lambda_1 &= 4ab ,\\
\lambda_2 &= \frac{1}{2} + 2(ab - ac -bc) - \frac{\sqrt{V}}{2} ,\\
\lambda_3 &= \frac{1}{2} + 2(ab - ac -bc) + \frac{\sqrt{V}}{2} ,
\end{align}
where $a+b+c+d=1$, and $V = [(\sqrt{a}+\sqrt{b}-\sqrt{c})^2+d] [(\sqrt{a}-\sqrt{b}+\sqrt{c})^2+d] [(-\sqrt{a}+\sqrt{b}+\sqrt{c})^2 +d] [(\sqrt{a}+\sqrt{b}+\sqrt{c})^2+d]$.

 The eigenvalues follow the ordering, $\lambda_2<\lambda_1<\lambda_3$, implying $M(\rho_{BC}) = \lambda_1+\lambda_3$,
\begin{equation}
M(\rho_{BC}) = \frac{1}{2} + \frac{ 12a b -4 a c - 4 b c + \sqrt{V}}{ 2}.
\label{eq:w-bv-bc}
\end{equation}

 Similarly, for the subsystems $\rho_{AC}$ and $\rho_{AB}$, the eigenvalues for $T_{AC}^TT_{AC}$  and $T_{AB}^TT_{AB}$ follow the same ordering as mentioned earlier and hence, the Bell-CHSH values for both the bipartitions are given by 
\begin{align}
M(\rho_{AC}) &= \frac{1}{2} + \frac{ -4a b +12 a c - 4 b c + \sqrt{V}}{ 2 },
\label{eq:w-bv-ac}
\\
M(\rho_{AB}) &= \frac{1}{2} + \frac{ -4a b -4 a c + 12 b c + \sqrt{V}}{ 2} .
\label{eq:w-bv-ab}
\end{align}
%
%

 Note that, here too an exchange symmetry similar to the $GHZ^R$ class of states holds. By swapping $a$ and $b$ in $M(\rho_{AB})$, we get $M(\rho_{AC})$. Then by swapping $b$ and $c$ we obtain $M(\rho_{BC})$ from $M(\rho_{AC})$. The Bell inequality violations for each subsystem are then obtained as, $\mathcal{B}(\rho_{XY}) = \max\{0,M(\rho_{XY})-1\}$.

\subsection{MBV class}

 In case of the MBV class state $\ket{\psi}_m$, the correlation matrices for $\rho_{BC}$ and $\rho_{AB}$ are equal and hence the expressions for the Bell-CHSH values are equal. Therefore, only $\rho_{AC}$ can exhibit violation of the Bell Inequality.
 The eigenvalues for $T_{AC}^TT_{AC}$ for $\rho_{AC}$ are then,
\begin{align}
\lambda_1 &= 1 ,\\
\lambda_2 &= \lambda_3 = \frac{4m^2}{(1+m^2)^2}.
\end{align}
This implies that
\begin{equation}
\label{eq:mdcc-rhoAC}
M(\rho_{AC})=1+\frac{4m^2}{(1+m^2)^2}.
\end{equation}
Therefore,
\begin{equation}
\label{eq:bellviolation-mdcc}
M(\psi_m) = 1+\frac{4m^2}{(1+m^2)^2}.
\end{equation}

\section{Numerical Method}
\label{numerics}
 To perform the numerical study we have generated Haar-uniformly distributed random three qubit pure states. As the sets of fully separable, bi-separable and the ${\it W}$ class states have measure zero with respect to the set of ${\it GHZ}$ class states, almost all of the Haar-uniformly generated random pure states belong to the ${\it GHZ}$ class. This has been cross checked by calculating the non vanishing tangle for the states generated in the aforesaid way. Now for each such randomly generated state we have evaluated the maximum bipartite Bell inequality violation and the three correlation measures, namely the tangle, the GGM and the DMS. We have performed our study for $10^7$ number of randomly generated states for each measure. However, the plots exhibit the numerical study for $10^6$ number of states.



\begin{thebibliography}{999}
\bibitem{schrodinger} E. Schr\"{o}dinger, M. Born, Mathematical Proceedings of the Cambridge Philosophical Society \textbf{31} 555 (1935); E. Schr\"{o}dinger, P. Dirac, A. M. Mathematical Proceedings of the Cambridge Philosophical Society \textbf{32} 446 (1936).

\bibitem{qent-rmp} R. Horodecki, P. Horodecki, M. Horodecki, \textit{et al}., Rev. Mod. Phys. {\bf81}, 865 (2009).

\bibitem{epr-1935} A. Einstein, B. Podolsky, N. Rosen, Phys. Rev. \textbf{47} (10): 777–780 (1935).

\bibitem{johnbell-epr} J. Bell, Physics \textbf{1} 3, 195–200, (1964).
\bibitem{bell-rmp} N. Brunner, D. Cavalcanti, S. Pironio \textit{et al}., Rev. Mod. Phys. \textbf{86} 419 (2014).

\bibitem{chsh} J.Clauser, M. Horne, A. Shimony, \textit{et al}., Phys. Rev. Lett. \textbf{23} 880 (1969);

\bibitem{bell-type} 
D.M. Greenberger. M Horne, A. Shimony \textit{et al}., Am. J. Phys. \textbf{58}, 1131 (1990); 
N.D. Mermin, Phys. Rev. Lett. \textbf{65}, 1838 (1990); 
M. Ardehali, Phys. Rev. A \textbf{46}, 5375 (1992);
A.V. Belinskii and D.N. Klyshko, Phys. Usp. \textbf{36} 653 (1993); 
A. Peres, Found. Phys. \textbf{29}, 589, (1999); 
I. Pitowsky and K. Svozil, Phys. Rev. A \textbf{64}, 014102 (2001);
I. Pitowsky, Quantum Probabilty--Quantum Logic (Springer, Berlin, 1989); 
R. F. Werner and M. M. Wolf, Phys. Rev. A \textbf{64}, 032112 (2001); 
W. Laskowski, T. Paterek, M. Zukowski \textit{et al}., Phys. Rev. Lett. \textbf{93}, 200401 (2004); 
L. Aolita, R. Gallego, A. Cabello \textit{et al}., Phys. Rev. Lett. \textbf{108}, 100401 (2012).
M. \.{Z}ukowski and C. Brukner, Phys. Rev. Lett. \textbf{88}, 210401 (2001);

\bibitem{svetlichny} G. Svetlichny, Phys. Rev. D \textbf{35}, 3066 (1987);

\bibitem{incompleteness-nonlocality-redhead} Michael Redhead, Incompleteness, Nonlocality and Realism, ISBN: 9780198242383.

\bibitem{key-dist} A.K. Ekert, Phys. Rev. Lett. \textbf{67}, 661 (1991);
J Barrett, L. Hardy and A. Kent, Phys. Rev.Lett. \textbf{95 } 010503 (2005);
A. Acin, N. Brunner, N. Gisin\textit{et al}, Phys. Rev. Lett. \textbf{98}, 230501 (2007).

\bibitem{qrgen} S. Pironio, A. Acin, S. Massar \textit{et al}., Nature (London) \textbf{464}, 1021 (2010); 
R. Colbeck, Ph.D. Thesis, University of Cambridge, 2007; 
R. Colbeck and A. Kent, J. Phys. A \textbf{44}, 095305 (2011).
 
\bibitem{dense} C. H. Bennett, S. J. Wisener, Phys. Rev. Lett.  {\bf 69}, 2881 (1992). 

\bibitem{mdcc} R. Nepal, R. Prabhu, A. Sen(De) \textit{et al}., Phys. Rev. A {\bf87}, 032336 (2013).

\bibitem{teleport} C. H. Bennett, G. Brassard, C Cr\'{e}peau \textit{et al}., Phys. Rev. Lett.  {\bf 70}, 1895 (1993); R. Horodecki, M. Horodecki, P. Horodecki, Phys. Lett. A 222 (1996) 21.

\bibitem{protocols} S.K. Sazim, I. Chakrabarty Eur. Phys. J. D \textbf{67}, 174 (2013); 
D.Deutsch, A.Ekert, R Jozsa \textit{et al}, Phys. Rev. Lett. \textbf{77}, 2818 (1996); 
L. Goldenberg and L. Vaidman, Phys. Rev. Lett. \textbf{75} 1239 (1995);
A Cabello Phys. Rev. A \textbf{61} 052312 (2000); 
C. Li, H-S Song and L. Zhou, Journal of Opptics B: Quantum semiclass. opt. \textbf{5} 155(2003); 
A. K. Pati, Phys. Rev. A {\bf 63}, 014320 (2001); 
S. Adhikari, B. S. Choudhury, Phys. Rev. A \textbf{74} 032323 (2006); 
I. Chakrabarty, B.S. Choudhary arXiv:0804:2568;
I. Chakrabarty Intl J. Quant. Inf. \textbf{7}, 559 (2009); 
S. Adhikari, A.S. Majumdar, N. Nayak, Phys. Rev. A. \textbf{77} 042301 (2008); 
I. Ghiu, Phys. Rev. A \textbf{67} 012323 (2003); 
S. Adhikari, I Chakrabarty, B.S. Choudhary, J. Phys. A \textbf{39} 8439 (2006); 
V. Buzek, V. Vedral, M.B. Plenio \textit{et al}., Phys. Rev. A \textbf{55} 3327 (1997); 
A Orieux, A. D'Arrigo, G Ferranti \textit{et al}., arXiv:1410.3678;
A. D'Arrigo, R. Lo Franco, G. Benenti \textit{et al}., Ann. Phys. \textbf{350} (2014).

\bibitem{vertesi}D. Cavalcanti, A Acin, N. Brunner \textit{et al}., Phys. Rev. A \textbf{87}, 042104 (2013).

\bibitem{gisin-npstates} N. Gisin, Phys. Lett. A 154, 201 (1991);

\bibitem{horodecki-mrho} R. Horodecki, P. Horodecki, M. Horodecki, Phys. Lett. A, \textbf{200}, 340 (1995).

\bibitem{batle-nl-ent} J. Batle, M. Casas, J. Phys. A: Math. Theor. \textbf{44} 445304 (2011).

\bibitem{batle-XY} J. Batle, M. Casas, Phys. Rev. A \textbf{82}, 062101 (2010).

\bibitem{sghose-rungta} S. Ghose, N. Sinclair, S. Debnath \textit{et al}., Phys. Rev. Lett. \textbf{102}, 250404 (2009); A. Ajoy and P. Rungta, Phys. Rev. A {\bf 81}, 052334 (2010).


\bibitem{collins-cereceda} D. collins, N. Gisin, S. Popescu \textit{et al}., Phys. Rev. Lett. \textbf{88}, 170405 (2002); 
J.L. Cereceda, phys. Rev. A \textbf{66}, 024102 (2002).

\bibitem{tura2013} N. Brunner, J. Sharam, T. Vertesi, Phys.A:Math.Theor. \textbf{43}, 385303 (2010); 
M. Wiesniak, M. Nawareg and M. Zukowski, Phys.Rev. A \textbf{86}, 042339 (2012); 
J. Tura, R. Augusiak, A. B. Sainz \textit{et al}., Science \textbf{344} 1256 (2014).

\bibitem{criterion} G.T\'{o}th, Phys. Rev. A \textbf{71}, 010301 (2005); 
J. Korbickz, J.I Cirac, J. Wehr \textit{et al}., Phys. Rev. Lett. \textbf{94} 153601 (2005); 
G. T\'{o}th, C. Knapp, O G\"{u}nhe \textit{et al}., Phys. Rev. Lett. \textbf{99}, 250405 (2007); 
O.Gittsovich, P.Hyllus, and O. G\"{u}nhe, Phys. Rev. A \textbf{82}, 032306 (2010); 
M. Markiewicz, W. Laskowski, T. Paterek \textit{et al}., Phys. Rev. A \textbf{87}, 034301 (2013).

\bibitem{ggm-ref} A. Shimony, Ann. N.Y. Acad. Sci. \textbf{755}, 675 (1995); 
H. Barnum and N. Linden, J. Phys. A \textbf{34}, 6787 (2001); 
T.-C. Wei and P.M. Goldbart, Phys. Rev. A \textbf{68}, 042307 (2003).

\bibitem{ggm} Aditi Sen (De) and Ujjwal Sen, Phys. Rev. A \textbf{81}, 012308 (2010); T. Das, S. Singha Roy, S. Bagchi \textit{et al}., arXiv: 1509.02085; D. Sadhukhan, S. Singha Roy, A.K. Pal \textit{et al}., arXiv:1511.03998.

\bibitem{Coffmankunduwotters} V. Coffman, J. Kundu and W. K. Wootters,  Phys. Rev. A, {\bf 61}, 052306 (2000).

\bibitem{discord}
L. Henderson and V. Vedral, J. Phys. A \textbf{34}, 6899 (2001);
H. Ollivier and W. H. Zurek, Phys. Rev. Lett. \textbf{88}, 017901 (2001);
A. Misra, A. Biswas, A. K. Pati \textit{et al}., Phys. Rev. E {\bf 91}, 052125 (2015).

\bibitem{dms1} R. Prabhu, A.K. Pati, A Sen(De) \textit{et al}., Phys. Rev. A \textbf{85},040102(R) (2012); 
G. L. Giorgi, \textit{ibid}. \textbf{84}, 054301 (2011).

%

\bibitem{monogamybunch} M. Koashi and A. Winter, Phys. Rev. A \textbf{69}, 022309 (2004);
T. J. Osborne and F. Verstraete, Phys. Rev. Lett. \textbf{96}, 220503 (2006);
G. Adesso, A. Serafini, and F. Illuminati, Phys. Rev. A \textbf{73}, 032345 (2006); 
T. Hiroshima, G. Adesso, and F. Illuminati, Phys. Rev. Lett. \textbf{98}, 050503 (2007); 
M. Seevinck, Phys. Rev. A \textbf{76}, 012106 (2007); 
S. Lee and J. Park, Phys. Rev. A \textbf{79}, 054309 (2009); 
A. Kay, D. Kaszlikowski, and R. Ramanathan, Phys. Rev. Lett. \textbf{103}, 050501 (2009); 
 \& references therein; A. Streltsov, G. Adesso, M. Piani \textit{et al.}, Phys. Rev. Lett. \textbf{109}, 050503 (2012).

\bibitem{wootters-concurrence} S. Hill and W.K. Wootters, Phys. Rev. Lett. \textbf{78}, 5022 (1997).

\bibitem{DurVidalCirac} W. D\"{u}r, G. Vidal, J. I. Cirac, Phys. Rev. A \textbf{62}, 062314 (2000).


\bibitem{bell-monogamy} D. Sadhukhan, S. Singha Roy, D. Rakshit \textit{et al}., New J. Phys. {\bf 17} 043013 (2015).


\bibitem{non-violation} T.R. de Oliveira, A. Saguia and M.S. Sarandy, EPL \textbf{100}, 60004 (2012).

\bibitem{tangle-mixed} A.~Osterloh, J.~Siewert, and A.~Uhlmann, Phys. Rev. A \textbf{77}, 032310 (2008)

\bibitem{conv-analysis} Convex Analysis by R.~T.~Rockafellar, Princeton Uni.~Press (1997).
\end{thebibliography}
\end{document}